\def\C{{\mathbb C}}
\def\R{{\mathbb R}}
\def\N{{\mathbb N}}
\def\Z{{\mathbb Z}}
\def\T{{\mathbb T}}
\def\Q{{\mathbb Q}}
\def\le{\leqslant}
\def\ge{\geqslant}
\newcommand{\be}{\begin{equation}}
\newcommand{\ee}{\end{equation}}
\newcommand{\re}{\mathrm{Re}}
\newcommand{\im}{\mathrm{Im}}
\newcommand{\e}{\varepsilon}
\newcommand{\supp}{\mathrm{supp}}
\newcommand{\bbR}{{\mathbb R}}
\newcommand{\eps}{\e}
\DeclareMathOperator{\diver}{div}
\theoremstyle{plain}
\newtheorem{theorem}{Theorem}[section]
\newtheorem{lemma}[theorem]{Lemma}
\newtheorem{corollary}[theorem]{Corollary}
\newtheorem{proposition}[theorem]{Proposition}
\newtheorem{hyp}{Assumption}[section]
\theoremstyle{definition}
\newtheorem{remark}[theorem]{Remark}
\newtheorem*{remark*}{Remark}
\numberwithin{equation}{section}
\begin{document}

\title[WKB analysis of Bohmian dynamics]
{WKB analysis of Bohmian dynamics}

\author[A. Figalli]{Alessio Figalli}

\address[A. Figalli]{Department of Mathematics, The University of Texas at Austin\\ 
1 University Station, C1200\\
Austin, USA}
\email{figalli@math.utexas.edu}

\author[C. Klein]{Christian Klein}
 \address[C. Klein]{Institut de Math\'ematiques de Bourgogne \\ 9 avenue Alain Savary\\
21078 Dijon Cedex, France} 
 \email{christian.klein@u-bourgogne.fr}

\author[P. Markowich]{Peter Markowich}
\address[P. Markowich]{King Abdullah University of Science and 
Technology (KAUST)\\ MCSE Division\\
Thuwal 23955-6900\\ Saudi Arabia}
\email{p.markowich@damtp.cam.ac.uk}

\author[C. Sparber]{Christof Sparber}

\address[C.~Sparber]
{Department of Mathematics, Statistics, and Computer Science, M/C 249, University of Illinois at Chicago, 851 S. Morgan Street, Chicago, IL 60607, USA}
\email{sparber@math.uic.edu}

\begin{abstract}
We consider a semi-classically scaled Schr\"odinger equation with WKB initial data. We 
prove that in the classical limit the corresponding Bohmian trajectories converge (locally in measure)
to the classical trajectories before the appearance of the first caustic. In a second step we show that 
after caustic onset this convergence in general no longer holds.
In addition, we provide numerical simulations of the Bohmian trajectories in the semiclassical regime 
which illustrate the above results. \end{abstract}

\date{\today}

\subjclass[2000]{81S30, 81Q20, 46N50}
\keywords{Schr\"odinger equation, semiclassical asymptotics, WKB method, Bohmian trajectories, Young measure, caustic, time-splitting method}

\thanks{AF was supported by NSF Grant DMS-0969962.
 CK
 thanks for financial support by  the ANR via the program 
 ANR-09-BLAN-0117-01 and 
 the project FroM-PDE funded by the European
 Research Council through the Advanced Investigator Grant Scheme.}
\maketitle

\section{Introduction}
\label{sec:intro}

\subsection{WKB asymptotics}  We consider the time-evolution of a quantum mechanical particle described by a wave function
$\psi^\e (t, \cdot) \in L^2(\R^d; \C)$ and governed by the Schr\"odinger equation (in dimensionless form):
\begin{equation}
\label{sch}
i\e \partial_t  \psi^\e  = -\frac{\e^2}{2}\Delta \psi^\e +
V(x)\psi^\e,\quad
\psi^\e|_{t=0}   = \psi^\e_{0} ,
\end{equation}
where $x \in \R^d$, $t\in \R_+ $,  and $V(x)\in \R$ a given potential (satisfying some regularity assumptions to be specified below).
In addition, we have rescaled the equation such that only one 
\emph{semi-classical parameter} $0<\e \ll1$ remains. 

The \emph{classical limit} of quantum mechanics 
is concerned with the asymptotic behavior of solutions to \eqref{sch} as $\e \to 0_+$. 
A possible way to describe these asymptotics is based on the time-dependent \emph{WKB method}, where 
one makes the ansatz (see, e.g., \cite{Ca, Ca1} for more details)
\begin{equation}\label{wkbansatz}
\psi^\e(t,x) = a^\e(t,x) e^{i S (t,x)/\e}
\end{equation}
for some $\e$-independent (real-valued) phase function $S(t,x)\in \R$ and a (in general complex valued) amplitude $a^\e(t,x)\in \C$ satisfying 
$$
a^\e \sim a + \e a_1 + \e^2 a_2+\dots,
$$
in the sense of asymptotic expansions. Assuming for the moment that $a^\e$ and $S$ are sufficiently smooth, one can plug \eqref{wkbansatz} into \eqref{sch} and compare
equal powers of $\e$ in the resulting expression. This yields a \emph{Hamilton-Jacobi equation} for the phase
\begin{equation}\label{hj}
\partial_t S + \frac{1}{2} |\nabla S|^2 + V(x)= 0, \quad S|_{t=0} = S_0,
\end{equation}
and a transport equation for the leading order amplitude
\begin{equation}\label{wkbamp}
\partial_t a + \nabla a \cdot \nabla S + \frac{a}{2} \Delta S= 0,\quad a|_{t=0} = a_0.
\end{equation}
Note that the latter can be rewritten in the form of a conservation law for the leading order particle density $\rho:=|a|^2$,
i.e., 
\begin{equation}\label{wkbcons}
\partial_t \rho + \diver (\rho \nabla S) = 0.
\end{equation}
The main problem of the WKB approach is that \eqref{hj} in general does not admit unique smooth solutions for all times. This can be seen, from the method of characteristics, 
where one needs to solve the following Hamiltonian system
\begin{equation}\label{classflow}
\left \{
\begin{aligned}
& \,  \dot X(t,y)= P(t,y) , \quad X(0,y)=y,\\
& \, \dot P(t,y)= - \nabla V(X(t,y)) , \quad P(0,y)=\nabla S_0(y).
\end{aligned}
\right. 
\end{equation}
By the Cauchy-Lipschitz theorem, this system of ordinary differential equations can be solved at least locally
in-time, which yields a flow map $X_t: y\mapsto  X(t,y)$. 
If we denote the corresponding inverse mapping by $Y_t: x \mapsto Y(t,x)$, i.e., $Y_t \circ X_t = {\rm id}$,
then the phase function $S$ satisfying \eqref{hj} is found to be (see, e.g., \cite{Ca1})
\begin{equation}\label{phi}
S(t,x) = S_0(Y(t,x)) + \int_0^t \biggl(\frac{1}{2} |P(\tau, y)|^2 - V(X(\tau, y))\biggr)\, d \tau\big|_{y=Y(t,x)}.
\end{equation}
Given such a smooth phase function $S$, one can, in a second step, integrate the amplitude equation \eqref{wkbamp} along the flow $X_t$ to obtain 
the amplitude in the form 
\begin{equation}\label{a}
a(t,x) = \frac{a_0(Y(t,x))}{\sqrt{J_t(Y(t,x))}}\, , 
\end{equation}
where $J_t(y):=\text{det} \nabla_y X(t,y)$ is the Jacobian determinant of the map $y\mapsto X(t,y)$.
The problem is that in general there is a (possibly, very short) time $T^*>0$, at which the flow $X_t$ ceases to be one-to-one. 
Points $x \in \R^d$ at which this happens are \emph{caustic points} and $T^*$ is called the 
{\it caustic onset time}. More precisely, let
\begin{equation*}\label{Ct}
 \mathscr{ C}_t = \{x \in \R^d \, : \, \text{there is $y\in \R^d$ such that $x = X(t,y)$ and $J_t(y) = 0$} \},
\end{equation*}
then the {\it caustic set} is defined by $ \mathscr  C:= \{(x,t) : x\in  \mathscr  C_t \}$ and the caustic onset time is $$T^* := \inf \{ t\in \R_+ \, : \, C_t \not = \emptyset \}.$$ 
For $t>T^*$ the solution of \eqref{hj}, obtained by the method of characteristics, typically becomes 
multi-valued due to the possibility of crossing trajectories, see Fig.~\ref{caustic}.
\begin{figure}[htb]
    \begin{center}
      \includegraphics[width=0.6\textwidth]{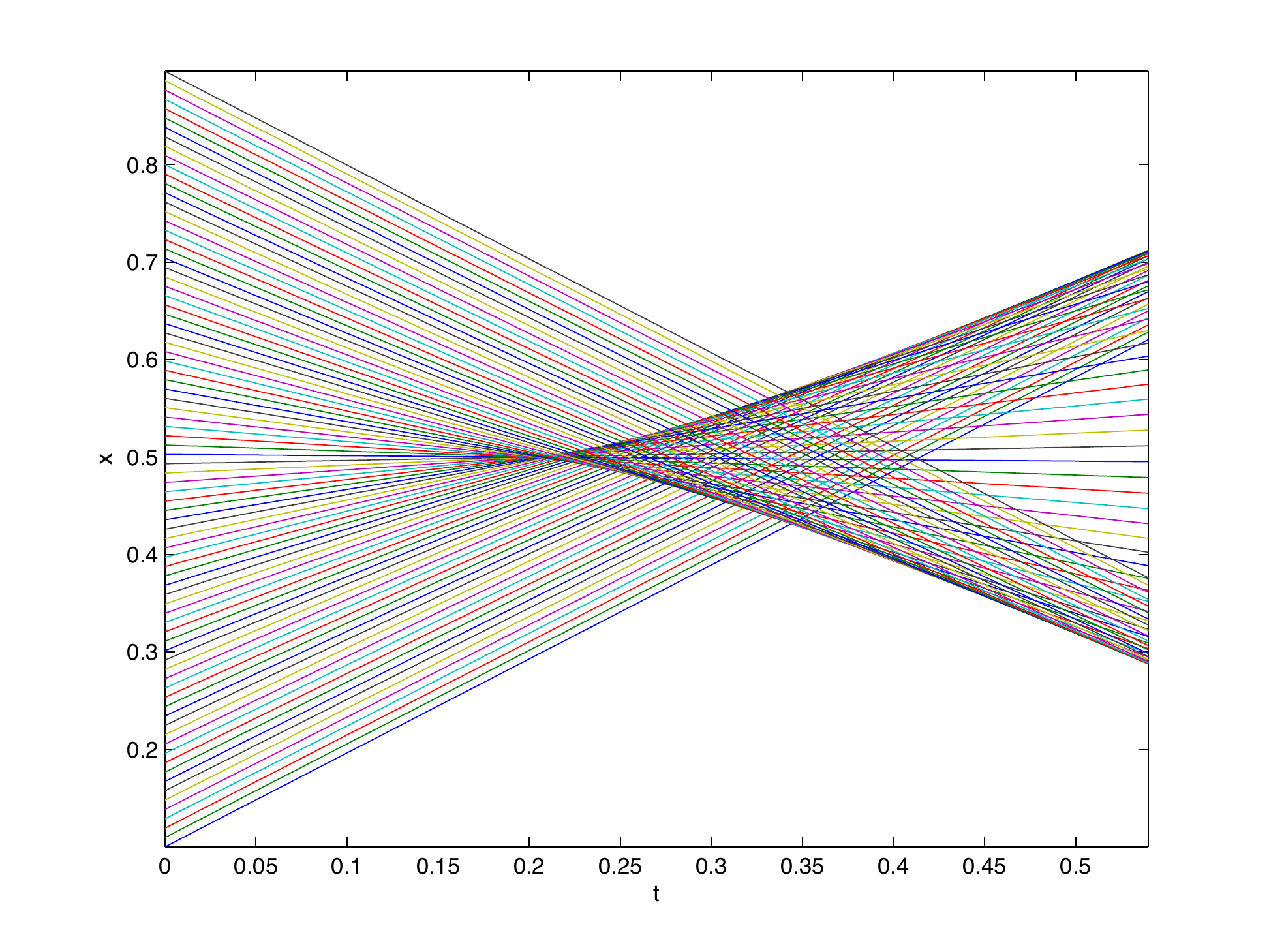}
    \end{center}
\caption{Classical trajectories for initial data $\nabla S_0(x)= - 
\tanh(5x-\frac{5}{2})$}
\label{caustic}
\end{figure}
On the other hand, weak solutions to \eqref{hj}, which can be uniquely defined (for example, by invoking the Lax-Olejnik formula) 
are not smooth in general and thus plugging \eqref{wkbansatz} into \eqref{sch} is no longer justified. From the physical point of view $T^*$ marks the generation of new frequencies
within $\psi^\e$ not captured by the simple one phase WKB ansatz \eqref{wkbansatz}.
Indeed, it is well known that for $t>T^*$ one generically 
requires a {\it multi-phase} WKB ansatz to correctly describe the asymptotic behavior of $\psi^\e$, see Section \ref{sec:free} for more details.

\subsection{Bohmian trajectories and Bohmian measures} It is well known, that to any wave function $\psi^\e\in H^1(\R^d)$ one can associate two basic observable densities. Namely, the \emph{position} 
and the \emph{current-density}, defined by
\begin{equation}\label{densities}
\rho^\e(t,x)= |\psi^\e(t,x)|^2, \quad J^\e(t,x) = \e \im\big(\overline{\psi^\e}(t,x)\nabla \psi^\e(t,x)\big),
\end{equation}
which satisfy the conservation law
\begin{equation*}
\partial_t \rho^\e + \diver_x J^\e = 0.
\end{equation*}
These two quantities play an important role in {\it Bohmian mechanics} developed in \cite{Bo1, Bo2} (see also \cite{DuTe} for a general introduction). 
In this theory, one defines $\e$-dependent 
trajectories $X^\e_t: y \mapsto X^\e(t,y)$, via the following differential equation
\begin{equation*}\label{bohm1}
\dot X^\e(t, y) = u^\e(t,X^\e(t, y)) ,\quad X^\e(0, y) = y\in \R^d,
\end{equation*} 
where the initial data $y\in \R^d$ is assumed to be distributed according to the measure $\rho_0^\e \equiv |\psi^\e_0|^2\in L_+^1(\R^d)$ and 
$u^\e$ denotes the quantum mechanical velocity field, (formally) defined by
\begin{equation}\label{velocity}
u^\e(t,x):= \frac{J^\e(t,x)}{\rho^\e(t,x)}= \e \im \left(\frac{ \nabla \psi^\e(t, x)}{ \psi^\e(t, x)} \right ).
\end{equation}
Note that 
the kinetic energy of $\psi^\e$ can be written in terms of $\rho^\e$ and $u^\e $ as
\begin{equation}\label{Ekin}
\begin{split}
E_{kin}(t): = & \ \frac{\e^2}{2} \int _{\R^d} | \nabla \psi^\e (t,x) |^2 dx \\
=& \ \frac{1}{2}  \int _{\R^d} \rho^\e(t,x) |u^\e(t,x)|^2 \, dx +  \frac{\e^2}{2} \int _{\R^d}  | \nabla \sqrt{\rho^\e}(t,x) |^2 dx, 
\end{split}
\end{equation}
which allows to define $u^\e\in L^2(\R^d; \rho^\e dx)$.
Although $u^\e$ is {\it not} necessarily continuous,
it was rigorously proved in \cite{BDGPZ, TeTu} that, for all $t\in \R$, $x\mapsto X^\e(t,\cdot)$ is well-defined $\rho_0^\e-a.e.$. In addition, 
one finds (under mild assumptions on the potential $V$) that, for all times $t\in \R_+$, 
the position density $\rho^\e(t,x)$ is given by the \emph{push-forward} of the initial density $ \rho^\e_{0} (x)$ under the mapping
$X^\e_t$, i.e., for any non-negative Borel function $\sigma:\bbR^d\to [0,+\infty]$ it holds: 
\begin{equation}\label{push1}
\int_{\bbR^d}\sigma(x)\rho^\e(t,x) dx=\int_{\bbR^d}\sigma(X^\e(t,y))\rho^\e_0(y) dy.
\end{equation}
In \cite{MPS1}, a phase space description of Bohmian mechanics was rigorously introduced through the definition of a
class of positive measures, called {\it Bohmian measures}, $\beta^\e \in \mathcal M^+(\R^d_x\times \R^d_p)$,
associated to $\rho^\e$ and $J^\e$. (Here and in the following, $\mathcal M^+$ denotes the set of 
non-negative Borel measures.)
Indeed, for a given scale $\e>0$ and any $\psi^\e \in H^1(\R^d)$, one defines the associated Bohmian measure $\beta^\e$ by
\[
\beta^\e(t,x,p):= \rho^\e(t,x) \delta (p - u^\e(t,x) ),
\]
where $u^\e$ is defined by \eqref{velocity} and $\delta(p-\, \cdot)$ denotes the $d$-dimensional 
delta distribution with respect to the momentum variable $p\in \R^d$. Note that even though $u^\e$ is not well defined at points where $\rho^\e(t,x)=0$, 
the Bohmian measure $\beta^\e$ is.
In addition, the zeroth and first moment of $\beta^\e$ with respect to $p\in \R^d$ 
yield the quantum mechanical particle and current densities, i.e., 
\[
\rho^\e(t,x) = \int_{\R^d} \beta^\e(t, x, dp), \quad J^\e(t,x) = \int_{\R^d} p \beta^\e(t, x, dp).
\]
It is shown in \cite[Lemma 2.5]{MPS1}
that, for all $t\in \R_+$, the measure $\beta^\e(t,x,p) $ is given by the push-forward of the initial measure 
$$\beta^\e_0(y,p)=\rho^\e_0 (y) \delta(p - u^\e_0(y)),$$ 
under the following $\e$-dependent flow:
\begin{equation}\label{bohmflow}
\left \{
\begin{aligned}
& \,  \dot X^\e(t,y)= P^\e(t,y) , \quad X^\e(0,y)=y,\\
& \, \dot P^\e(t,y)= - \nabla V(X^\e(t,y)) -  \nabla V_{B}^\e(t,X^\e(t,y)), \quad P^\e(0,y)= u^\e_0(y),
\end{aligned}
\right. 
\end{equation}
where $V^\e_B(t,x)$, denotes the \emph{Bohm potential}:
\begin{equation}
\label{bohmpot}
V^\e_B:= -\frac{\e^2}{2\sqrt{\rho^\e}}\, \Delta \sqrt{\rho^\e}.
\end{equation}
More precisely, for any non-negative Borel function $\varphi :\R^d_x\times\R^d_p\to [0,+\infty]$
it holds 
\begin{equation}\label{push2}
\iint_{\R_{x,p}^{2d}} \varphi(x,p) \beta^\e (t, dx, dp) =  \int_{\R_y^{d}} \varphi(X^\e(t,y),P^\e(t,y))\rho_0^\e(y) \, dy .
\end{equation}
Note that \eqref{bohmflow} is the characteristic flow of the 
following perturbed Burgers' type equation
\[
\partial_t u^\e+ \left( u^\e \cdot \nabla \right) u^\e + \nabla V = \nabla V^\e_B(t,x),\quad u^\e|_{t=0}= u^\e_0,
\]
which allows us to identify $ \dot X^\e(t,y) =P^\e(t,y)  = u^\e(t, X^\e(t,y))$. 
On the other hand, for wave functions $\psi^\e$ given in WKB form \eqref{wkbansatz}, we have $J^\e = |a^\e|^2 \nabla S$ in which case the velocity field is simply given by
$$u^\e(t,x)=\frac{J^\e(t,x)}{\rho^\e(t,x)} = \nabla S(t,x).$$ 
One can therefore regard \eqref{bohmflow} as a nonlinear perturbation 
of the classical equations of motion \eqref{classflow}. One consequently expects the Bohmian trajectories $(X^\e, P^\e)$ to converge to the corresponding classical $(X, P)$,
in the limit $\e \to 0_+$. We shall prove that, at least before caustic onset, this convergence indeed holds true (in a sense to be made precise, see Theorem \ref{th1}). 
After caustic onset, however, the situation in general is much more complicated in view of Fig.~\ref{caustic}.
Indeed, we shall show that in general one cannot expect the Bohmian trajectories to 
converge to the (multi-valued) classical flow, see Theorem \ref{thm:free}.

The rest of the paper is organized as follows:
In Section \ref{sec:apriori} we describe some general properties of Bohmian dynamics and of the Young measures
associated to the Bohmian trajectories. These properties will be used in Section \ref{sec:pre} to prove that 
the Bohmian trajectories converge to the classical ones before caustic onset.
In Section \ref{sec:super} we prove a general result about Bohmian measures associated to multi-phase WKB states.
This result is then used in Section \ref{sec:free} to show that, even in the free case (where $V(x)\equiv0$), the Bohmian
measure may differ from the Wigner measure, and that in general
the Bohmian trajectories do not converge to the Hamiltonian ones after caustics.
Finally, in Section \ref{sec:num} we present a numerical simulations of Bohmian trajectories in the
regime $0<\e\ll 1$.

\section{Mathematical preliminaries}\label{sec:apriori}

\subsection{Basic a-priori estimates and existence of a limiting measure} From now on the potential $V$ will satisfy the following assumptions.
\begin{hyp}\label{hypV} 
The potential $V\in C^\infty(\R^d;\R)$ is assumed to be bounded below and sub-quadratic, i.e.,
$$
\partial_x^k V \in L^\infty(\R^d)\, ,\quad \forall k\in \N^d \text{
such that }|k|\ge 2. 
$$
\end{hyp}
Since $V$ is bounded below, without loss of generality we can assume $V(x)\ge 0$. 
Assumption \ref{hypV} is (by far) sufficient to guarantee the existence of a unique strong solution $\psi^\e \in C(\R^d; L^2(\R^d))$ 
to \eqref{sch}, satisfying two basic conservation laws of quantum mechanics. Namely, conservation of the total {\it mass}
\begin{equation}\label{mass}
M^\e(t):=\int _{\R^d} | \psi^\e (t,x) |^2 dx = M^\e(0) ,
\end{equation}
and the {\it total energy}
\begin{equation}\label{energy}
E^\e(t) : =   \frac{\e^2}{2} \int _{\R^d} | \nabla \psi^\e (t,x) |^2 dx + \int_{\R^d} V(x) |  \psi^\e (t,x) |^2 dx =E^\e(0).
\end{equation}
A direct consequence of these conservation laws is the following result to be used later on.
\begin{lemma} \label{lem:apriori} Let $V$ satisfy Assumption \ref{hypV} and $\psi_0^\e \in H^1(\R^d)$. Then, it holds: 
\[ 
\int_0^T \int_{\R^d} | P^\e(t,y)|^2 \rho^\e_0 (y) \, dy \, dt \le T E^\e(0), \quad  \forall T \in \R_+ .
\]
\end{lemma}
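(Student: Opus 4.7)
The plan is to convert the integral over initial data $y$ into an integral over physical space $x$ via the push-forward property, and then recognize the resulting quantity as (up to a constant) the kinetic energy, which is controlled by the conserved total energy.

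More precisely, I would first fix $t\in[0,T]$ and use the identification $P^\e(t,y)=u^\e(t,X^\e(t,y))$ together with the push-forward formula \eqref{push1} applied to the Borel function $\sigma(x)=|u^\e(t,x)|^2$. This gives
\[
\int_{\R^d} |P^\e(t,y)|^2 \rho_0^\e(y)\,dy = \int_{\R^d} |u^\e(t,x)|^2 \rho^\e(t,x)\,dx.
\]
One should note here that $u^\e$ is only defined $\rho^\e$-almost everywhere, but the right-hand side is well defined since the integrand vanishes wherever $\rho^\e=0$; this justifies the manipulation despite the possible singularities of $u^\e$.

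Next, I would use the decomposition \eqref{Ekin} of the kinetic energy. Since the Bohm quantum correction $\frac{\e^2}{2}\int|\nabla\sqrt{\rho^\e}|^2\,dx$ is non-negative, we obtain
\[
\frac{1}{2}\int_{\R^d} |u^\e(t,x)|^2\rho^\e(t,x)\,dx \le E_{kin}(t).
\]
Combined with the assumption $V\ge 0$ (justified in the paragraph after Assumption \ref{hypV}), this yields $E_{kin}(t)\le E^\e(t)$, which by the conservation law \eqref{energy} equals $E^\e(0)$ for every $t$.

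Finally, integrating in time over $[0,T]$ gives the claimed bound (up to a harmless factor of $2$, which I suspect is absorbed into the author's convention). There is no real obstacle here: the only subtle point is invoking the push-forward in a setting where $u^\e$ may be discontinuous, but this is exactly the content of the results of \cite{BDGPZ,TeTu} cited just before \eqref{push1}, so one is allowed to apply \eqref{push1} to any non-negative Borel function.
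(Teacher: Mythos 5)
Your proposal follows essentially the same route as the paper: use the push-forward identity \eqref{push1} with $\sigma = |u^\e(t,\cdot)|^2$ to rewrite the left side as $\int_0^T\int \rho^\e |u^\e|^2$, then bound this by the kinetic energy via \eqref{Ekin} (dropping the non-negative Bohm term), control the kinetic energy by the total energy using $V\ge 0$, and invoke energy conservation \eqref{energy}. You are in fact slightly more careful than the paper in flagging the factor of $2$ coming from the $\tfrac12$ in \eqref{Ekin}, which the paper silently drops; the honest constant in the conclusion should be $2TE^\e(0)$, but this is immaterial for the way the lemma is used.
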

\begin{proof} Let us recall that $\rho^\e(t,x)$ is the push forward of $\rho_0^\e$ under the mapping $X^\e_t$, i.e., 
identity \eqref{push1} holds true for all $t\in \R_+$. Using this identity with $ \sigma(\cdot) = |P^\e(t,\cdot)|^2$
and recalling that $P^\e(t,y) = \dot X^\e(t,y) = u^\e(t, X^\e(t,y))$, we find
\begin{align*}
 \int_0^T \int_{\R^d} | P^\e(t,y)|^2 \rho^\e_0 (y) \, dy \, dt 
  & = \int_0^T \int_{\R^d} | u^\e(t,X^\e(t,y))|^2 \rho^\e_0 ( y) \, dy \, dt\\ 
  & = \int_0^T \int_{\R^d} | u^\e(t,y)|^2 \rho^\e (t, y) \, dy \, dt.
\end{align*}
In view of energy conservation, the last term on the right hand side is bounded by
\begin{align*}
\int_0^T \int_{\R^d} | u^\e(t,y)|^2 \rho^\e (t, y) \, dy \, dt  \le   \int_0^T E^\e(t) \, dt = T E^\e(0),
\end{align*}
as desired.
\end{proof}
In addition, to Assumption \ref{hypV} we require the following basic properties for the initial datum $\psi^\e_0$.
\begin{hyp}\label{hypini} 
The initial data of \eqref{sch} satisfy $M^\e(0) \equiv \| \psi_0^\e\|_{L^2}^2=1,$
and there exists $C_0>0$ such that 
\begin{equation*}\label{energybound}
\sup_{0<\e \le 1} E^\e(0) \le C_0.
\end{equation*}
\end{hyp}
\begin{remark} The normalization $\| \psi_0^\e\|_{L^2}^2=1$ is imposed for the sake of mathematical convenience. 
>>From a physical point of view, it is required for the usual probabilistic interpretation of quantum mechanics in which $\rho^\e = |\psi^\e|^2$ denotes the probability measure of finding 
the particle within a certain spatial region  $\Omega\subseteq \R^d$. \end{remark}

Assumption \ref{hypini}, together with conservation of mass and energy and the fact that $V(x)\ge0$, implies that for all $ t\in \R_+$:
\begin{equation}\label{epsosc}
\sup_{0<\e \le 1}( \| \psi^\e(t)  \|_{L^2} +   \| \e \nabla \psi^\e(t)  \|_{L^2})< + \infty.
\end{equation}
In other words, $\psi^\e(t)$ is {\it $\e$--oscillatory} and we are in the framework of \cite{MPS1}.
Indeed, it was shown in  \cite[Lemma 3.1]{MPS1} that 
\eqref{epsosc} implies the existence of a limiting measure $\beta (t)\in \mathcal M^+(\R^d_x\times \R^d_p)$ such 
that, up to extraction of a subsequence, it holds:
\begin{equation}\label{bohmlim}
\beta^\e  \stackrel{\e\rightarrow 0_+ }{\longrightarrow} \beta, \quad \text{in
$L^\infty(\R_t;\mathcal M^+(\R_x^d\times \R^d_p)) \, {\rm weak-}\ast$,}
\end{equation}
and we also have
\begin{equation}\label{Bmoment}
\rho^\e(t,x) \stackrel{\e\rightarrow 0_+ }{\longrightarrow} \int_{\R^d} \beta(t, x, dp),
\quad J^\e(t,x) \stackrel{\e\rightarrow 0_+ }{\longrightarrow} \int_{\R^d} p \beta(t, x, dp),
\end{equation}
where the limits have to be understood in $L^\infty(\R_t; \mathcal M^+(\R^d_x)) \, {\rm weak-}\ast$.

\subsection{Young measures of Bohmian trajectories} The limiting Bohmian measure $\beta$ is intrinsically connected to the 
Young measure (or parametrized measure) of the Bohmian dynamics. 
To this end, we first note that $\Phi^\e(t,y)\equiv (X^\e(t,y), P^\e(t,y))$ is measurable in $t,y$ and 
thus, there exists an associated Young measure $$ \Upsilon_{t,y}: \R_t \times \R^d_y \to \mathcal M^+(\R^d_y \times \R^d_p): \quad (t,y) \mapsto \Upsilon_{t,y} (dx, dp),$$
which is defined through the following limit (see \cite{Ba, Hu, Pe1}):
for any test function $\sigma \in L^1(\R_t\times \R^d_y; C_0(\R^{2d}))$,
$$
\lim_{\e \to 0} \iint_{\R\times \R^d} \sigma ( t, y, \Phi_\e(t,y)) \, dy \, dt = \iint_{\R \times \R^d} \iint_{\R^{2d}} \sigma ( t,y,x,p) \Upsilon_{t,y}(dx,dp) \, dy\,  dt.
$$
Having in mind 
\eqref{push2}, if we assume in addition that 
\[\rho_0^\e  \stackrel{\e\rightarrow 0_+ }{\longrightarrow} \rho_0, \quad \text{strongly in $L^1_+(\R^d)$,}\]
we easily get
the following identity:
\begin{equation}\label{formula}
\beta(t,x,p) = \int_{\R_y^d}  \Upsilon_{t,y} (x,p )  \rho_0(y) dy .
\end{equation}
Here, $\beta$ is the limiting Bohmian measure obtained in \eqref{bohmlim} for a specific subsequence.
The relation \eqref{formula} has already been observed in \cite{MPS1} and can be used to infer the following a-priori estimate on $\Upsilon_{t,y}$.
\begin{lemma}\label{lem:apriori2} 
Let Assumptions \ref{hypV} and \ref{hypini} hold, and assume in addition that
$\rho_0^\e  \stackrel{\e\rightarrow 0_+ }{\longrightarrow} \rho_0$ strongly in $L^1_+(\R^d)$. Then, for any $T\in \R_+$, 
there exists a $C=C(T)>0$ such that
\[
\int_0^T \iiint_{ \R^{2d}\times \R^d} |p|^2 \rho_0(y) \Upsilon_{t,y}(dx,dp)  \, dy\,  dt \le C(T).
\]
\end{lemma}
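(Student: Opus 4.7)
The plan is to combine the uniform bound of Lemma \ref{lem:apriori} with the defining property of the Young measure $\Upsilon_{t,y}$, and to exchange $\rho_0^\e$ with $\rho_0$ using the assumed strong $L^1$-convergence. The main obstacle is that the natural test function $\sigma(t,y,x,p) = \chi_{[0,T]}(t)\rho_0(y)|p|^2$ is not admissible because $|p|^2 \notin C_0(\R^{2d})$. We therefore perform a monotone truncation argument in the momentum variable.

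First, choose a family of non-negative cutoffs $\varphi_R \in C_c(\R^d_x\times\R^d_p)$ with $0\le \varphi_R(x,p)\le |p|^2$ and $\varphi_R \nearrow |p|^2$ pointwise as $R\to\infty$. Set $\sigma_R(t,y,x,p) := \chi_{[0,T]}(t)\rho_0(y)\varphi_R(x,p)$; since $\rho_0\in L^1(\R^d)$ and $\varphi_R\in C_c(\R^{2d})$, this is an admissible Young measure test function (one may approximate the indicator $\chi_{[0,T]}$ by continuous functions if desired, the argument being insensitive to this). The defining property of $\Upsilon_{t,y}$ then gives
\begin{equation*}
\int_0^T\!\!\iiint_{\R^{2d}\times \R^d}\! \varphi_R(x,p)\rho_0(y)\,\Upsilon_{t,y}(dx,dp)\,dy\,dt
=\lim_{\e\to 0}\int_0^T\!\!\int_{\R^d}\! \varphi_R(\Phi^\e(t,y))\rho_0(y)\,dy\,dt.
\end{equation*}

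Next I would replace $\rho_0$ with $\rho_0^\e$ on the right-hand side at negligible cost. Since $\|\varphi_R\|_\infty<\infty$,
\begin{equation*}
\left|\int_0^T\!\!\int_{\R^d}\varphi_R(\Phi^\e(t,y))(\rho_0(y)-\rho_0^\e(y))\,dy\,dt\right|\le T\|\varphi_R\|_\infty \|\rho_0^\e-\rho_0\|_{L^1(\R^d)}\xrightarrow[\e\to 0]{}0.
\end{equation*}
Using $\varphi_R(\Phi^\e(t,y))\le |P^\e(t,y)|^2$, together with Lemma \ref{lem:apriori} and Assumption \ref{hypini}, yields
\begin{equation*}
\int_0^T\!\!\int_{\R^d}\varphi_R(\Phi^\e(t,y))\rho_0^\e(y)\,dy\,dt \le \int_0^T\!\!\int_{\R^d}|P^\e(t,y)|^2 \rho_0^\e(y)\,dy\,dt \le T\,E^\e(0)\le TC_0.
\end{equation*}
Passing to the limit $\e\to 0$ gives the same bound for $\iint \varphi_R\rho_0\,\Upsilon_{t,y}\,dy\,dt$, uniformly in $R$.

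Finally, I would let $R\to\infty$ and invoke monotone convergence on the left-hand side, since $\varphi_R\nearrow |p|^2$ and the integrating measure $\rho_0(y)\,\Upsilon_{t,y}(dx,dp)\,dy\,dt$ is non-negative. This yields
\begin{equation*}
\int_0^T\iiint_{\R^{2d}\times \R^d}|p|^2\rho_0(y)\,\Upsilon_{t,y}(dx,dp)\,dy\,dt\le TC_0,
\end{equation*}
so the claim holds with $C(T):=TC_0$. The only subtle step is the truncation in $p$ to make the Young measure identity applicable; everything else is a bookkeeping combination of the energy-based a priori estimate and the $L^1$-convergence of the initial densities.
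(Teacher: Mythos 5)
Your proof is correct and leads to the same bound $C(T)=TC_0$. It takes a mildly different route from the paper: the paper first passes to the limiting Bohmian measure $\beta$, uses the relation $\beta(t,\cdot,\cdot)=\int\Upsilon_{t,y}\,\rho_0(y)\,dy$ (their equation \eqref{formula}), bounds $\iint|p|^2\beta^\e(t,dx,dp)$ directly by the kinetic energy via \eqref{Ekin} and energy conservation, and then invokes Fatou's lemma for the weak-* limit; you bypass $\beta$ entirely, work straight from the Young-measure definition with an explicit monotone truncation $\varphi_R\nearrow|p|^2$, plug in the space-time energy bound of Lemma \ref{lem:apriori}, and conclude by monotone convergence. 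The two arguments are essentially equivalent at heart (your monotone truncation is precisely the mechanism underlying the Fatou step for weak-* limits against an unbounded lower semicontinuous integrand), but yours is more self-contained since it does not rely on \eqref{formula}, at the modest cost of handling the truncation and the exchange $\rho_0\leftrightarrow\rho_0^\e$ explicitly --- both of which you do correctly, using that $\rho_0\in L^1$ and $\varphi_R\in C_c$ make $\sigma_R$ an admissible test function, and that $\|\varphi_R\|_\infty\|\rho_0^\e-\rho_0\|_{L^1}\to 0$ at fixed $R$.
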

\begin{proof} Using \eqref{formula} we see that 
\[
\iiint_{ \R^{2d} \times \R^d} |p|^2\rho_0(y) \Upsilon_{t,y}(dx,dp) \, dy =  \iint_{ \R^{2d}} |p|^2 \beta(t,dx,dp).
\]
Now we recall that, by definition, \[\beta^\e(t,x,p) = \rho^\e(t,x) \delta(p-u^\e(t,x))\] and hence
\[
\iint_{ \R^{2d}} |p|^2 \beta^\e(t,dx,dp) = \int _{\R^d} \rho^\e(t,x) |u^\e(t,x)|^2 \, dx \le 2E^\e_{\rm kin}(t) \le C(T),
\]
in view of \eqref{Ekin} and energy conservation. This uniform (in $\e$) bound together with Fatou's lemma implies
\[
 \iint_{ \R^{2d}} |p|^2 \beta(t,dx,dp)\le C(T),
 \]
 and the assertion is proved.
\end{proof}
Lemma \ref{lem:apriori2} together with Lemma \ref{lem:apriori} will be used to prove the following important property for the zeroth moment of $\Upsilon_{t,y}$.

\begin{proposition} \label{proptrans} Let Assumptions \ref{hypV} and \ref{hypini} hold, and assume in addition that $\rho_0^\e  \stackrel{\e\rightarrow 0_+ }{\longrightarrow} \rho_0$ strongly in $L^1_+(\R^d)$. Denote 
\[
\upsilon_{t,y}(x):= \int_{\R^d} \Upsilon_{t,y}(x,dp).\]
Then $\upsilon_{t,y}\in \mathcal M^+(\R_x^d)$ solves, a.e. with respect to the measure $\rho_0(y)$, the following transport equation
\[
\partial_ t \upsilon_{t,y} + \diver_x \left( \int_{\R^d} p \Upsilon_{t,y}(x,dp) \right) = 0, \quad \upsilon_{t=0,y}(x) = \delta(x-y),
\]
in the sense of distributions on $\mathcal D'(\R_t\times \R^d_x)$.
\end{proposition}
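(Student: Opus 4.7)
My plan is to derive a weak transport identity at the $\e$-level directly from the defining ODE $\dot X^\e(t,y)=P^\e(t,y)$, and then to pass to the Young-measure limit, using Lemmas \ref{lem:apriori}--\ref{lem:apriori2} to compensate for the fact that the $p$-variable is unbounded. First, I would fix $T>0$ and a test function $\varphi\in C^\infty_c([0,T)\times\R^d_x)$ and, applying the chain rule to $t\mapsto\varphi(t,X^\e(t,y))$ together with $X^\e(0,y)=y$ and $\dot X^\e=P^\e$, obtain the pointwise identity
\begin{equation*}
-\varphi(0,y)=\int_0^T\bigl[\partial_t\varphi(t,X^\e(t,y))+P^\e(t,y)\cdot\nabla_x\varphi(t,X^\e(t,y))\bigr]dt.
\end{equation*}
Multiplying this by $\psi(y)\rho_0^\e(y)$ for an arbitrary $\psi\in C^\infty_c(\R^d_y)$, integrating in $y$, and letting $\e\to 0_+$ along the subsequence used to construct $\Upsilon_{t,y}$ will produce the desired identity, modulo the three convergences discussed below; the arbitrariness of $\psi$ and of a countable dense family of $\varphi$'s will then yield the conclusion for $\rho_0$-a.e.\ $y$.

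The left-hand side converges to $-\int\varphi(0,y)\psi(y)\rho_0(y)\,dy$ by the strong $L^1$ convergence $\rho_0^\e\to\rho_0$. For the $\partial_t\varphi$ contribution, the integrand $\psi(y)\rho_0(y)\partial_t\varphi(t,x)$ lies in $L^1(\R_t\times\R^d_y;C_0(\R^{2d}))$, so the defining property of the Young measure (after absorbing the $L^1$-small correction $(\rho_0^\e-\rho_0)$ using boundedness of $\partial_t\varphi$) gives the expected limit involving $\upsilon_{t,y}$.

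The main obstacle is the $P^\e\cdot\nabla_x\varphi$ term, since $p\cdot\nabla_x\varphi(t,x)$ does not lie in $C_0$ as a function of $p$. I would handle this by a standard truncation: pick $\chi\in C^\infty_c(\R^d)$ equal to $1$ near the origin, set $\chi_R(p):=\chi(p/R)$, and split the integral into a bounded piece involving $\chi_R(P^\e)$ plus a remainder. The elementary bound $|p|(1-\chi_R(p))\le|p|^2/R$, combined with Lemma \ref{lem:apriori} and the uniform energy bound of Assumption \ref{hypini}, controls the remainder by $O(1/R)$ uniformly in $\e$. The truncated piece has a bounded, compactly supported integrand in $(x,p)$, so Young-measure convergence applies directly; sending $\e\to 0_+$ first and then $R\to\infty$ (the latter limit being justified on the limit side by the $|p|^2$-moment bound of Lemma \ref{lem:apriori2}) yields $\int\psi\rho_0\iint p\cdot\nabla_x\varphi\,\Upsilon_{t,y}(dx,dp)\,dt\,dy$.

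Collecting the three limits, the resulting identity reads $\int_{\R^d}\psi(y)\rho_0(y)F_\varphi(y)\,dy=0$ with
\begin{equation*}
F_\varphi(y):=\varphi(0,y)+\int_0^T\!\!\int\partial_t\varphi\,\upsilon_{t,y}(dx)\,dt+\int_0^T\!\!\iint p\cdot\nabla_x\varphi\,\Upsilon_{t,y}(dx,dp)\,dt.
\end{equation*}
Since $\psi$ is arbitrary, $F_\varphi$ vanishes for $\rho_0$-a.e.\ $y$; restricting to a countable dense family of $\varphi$'s and letting $T\to\infty$ makes the exceptional null set independent of $\varphi$, and the resulting statement is exactly the distributional formulation of the transport equation with initial datum $\delta(x-y)$ (the symmetric argument for $t<0$ being identical).
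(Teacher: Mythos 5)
Your proposal is correct and takes essentially the same approach as the paper: both use the chain rule for $t\mapsto\varphi(t,X^\e(t,y))$ to convert the $P^\e\cdot\nabla_x\varphi$ term into a total time derivative yielding the $\delta(x-y)$ initial datum, and both handle the unboundedness of $P^\e$ by a momentum truncation, controlling the remainder via Lemma \ref{lem:apriori} and the limit via Lemma \ref{lem:apriori2}. The only (cosmetic) difference is the order of operations: you apply the chain rule at the $\e$-level before integrating and passing to the limit, whereas the paper first establishes the technical Young-measure limit \eqref{techlim} and then performs the chain-rule/integration-by-parts step on the limit side.
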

This transport equation will play a crucial role in the convergence proof of Bohmian trajectories before caustic onset.
\begin{proof} As a first, preparatory step we shall prove that, for all test functions $\zeta \in C_0(\R_t\times \R^d_y)$, $\sigma \in C_0( \R^d_x)$:
\begin{equation}\label{techlim}
\begin{split} 
\lim_{\e \to 0_+} \int_0^T \int_{\R^d}   P^\e(t,y) \zeta(t,y) \sigma(X^\e(t,y))  \rho^\e_0 (y) \, dy \, dt =  \\
\int_0^T \zeta(t,y)\iint_{\R^{2d}} p \, \sigma(x)    \Upsilon_{t,y}(dx, p) \, \rho_0 (y) \, dy \, dt ,
\end{split}
\end{equation} 
To this end, let $K>0$ and $\chi_K\in C^\infty_c(\R^d)$ be such that and $\chi_K(p) = 1$ for $|p| \le K$, and $\chi_K(p)=0$ for $|p|>K+1$.
Then, by writing $P^\e = \chi_K(P^\e) + (1-\chi_K(P^\e))$ we can decompose
\[ \int_0^T \int_{ \R^d}  \zeta(t,y) \sigma(X^\e(t,y))  P^\e(t,y) \rho^\e_0 (y) \, dy \, dt  = I^{\e, K}_1 + I^{\e, K}_2.\]
Because of the strong convergence of $\rho_0^\e$, the first term on the right hand side has the following limit:
\[
I^{\e, K}_1 \stackrel{\e \rightarrow 0_+ }{\longrightarrow}  \int_0^T \zeta(t,y)\iint_{\R^{2d}} \sigma(x)  \chi_K(p)   \Upsilon_{t,y}(dx, dp) \, \rho_0 (y) \, dy \, dt ,
\]
On the other hand, by having in mind the result of Lemma \ref{lem:apriori}, the second term on the right hand side can be estimated by 
\begin{align*}
|I^{\e, K}_2| & \le C \int_0^T \int_{|P^\e|\ge K} | P^\e(t,y)| \rho^\e_0(y) \, dy, \, dt \\
& \le \frac{C}{K} \int_0^T \int_{\R^d}| P^\e(t,y)|^2 \rho^\e_0(y) \, dy, \, dt \le \frac{CT}{K} E^\e(0).
\end{align*}
In view of Lemma \ref{lem:apriori2} we can let $K\to +\infty$, which yields
$|I^{\e, K}_2| \stackrel{K\rightarrow +\infty }{\longrightarrow}0$ and the validity of \eqref{techlim}.

With \eqref{techlim} in hand, we shall now show that $\upsilon_{t,y}$ indeed obeys the transport equation given above. 
Let $\zeta, \varphi \in C_{\rm c}^\infty(\R^d)$, $\sigma \in C_{\rm c}^\infty[0,\infty)$,
be smooth compactly supported test functions. Then by \eqref{techlim} we get
\begin{align*}
&\int_0^\infty \iint_{\R^{2d}}  \Bigl(\partial_t \sigma(t) \varphi(x)   + \sigma(t) p\cdot \nabla_x \varphi(x) \zeta(y) \Bigr) \Upsilon_{y,t}(x,dp) \rho_0(y)  \, dy \, dt\\
&= \lim_{\e \to 0_+} \int_0^\infty \int_{\R^{d}}
\Bigl(\partial_t \sigma(t) \varphi(X^\e(t,y))   +
\sigma(t) P^\e(t,y) \cdot \nabla_x \varphi(X^\e(t,y)) \zeta(y) \Bigr)  \rho_0(dy)  dt.
\end{align*}
Recalling that $P^\e(t,y) = \dot X^\e(t,y)$, which implies that
\[
P^\e(t,y) \cdot \nabla_x \varphi(X^\e(t,y)) = \frac{d}{dt} \varphi(X^\e(t,y)),
\]
we obtain 
\begin{align*}
& \int_0^\infty \int_{\R^{d}} \Bigl(\partial_t \sigma(t) \varphi(X^\e(t,y))
 + \sigma(t) P^\e(t,y) \cdot \nabla_x \varphi(X^\e(t,y)) \zeta(y) \Bigr)  \rho_0(dy)  \, dt \\
& = \int_0^\infty \int_{\R^{d}} \Bigl(\partial_t \sigma(t) \varphi(X^\e(t,y))
+  \sigma(t) \frac{d}{dt} \varphi(X^\e(t,y)) \zeta(y) \Bigr)  \rho_0(y)\,  dy\, dt\\
& =  \int_{\R^{d}} \sigma(0)\varphi(X^\e(0,y))  \zeta(y)   \rho_0(y) \, dy \\
& =  \int_{\R^{d}}  \sigma(0) \varphi(y)  \zeta(y)   \rho_0(y) \, dy.
\end{align*}
where in going from the second to the third we have itegrated by parts with respect to time,
and from the third to the forth line 
we have used that $X^\e(0,y) = y$ by definition. The obtained expression in the last line is nothing but the initial condition, since
\[
 \iint_{\R^{d}}  \varphi(y)  \zeta(y)   \rho_0(y)  dy =  \iint_{\R^{2d}}   \varphi(x)  \Upsilon_{0,y} (x,dp)\zeta(y)   \rho_0(y)  dy,
\]
is equivalent to saying that $$\upsilon_{0,y}(x) \equiv \int_{\R^d} \Upsilon_{0,y} (x,dp) = \delta(x-y), \quad \rho_0(dy) - a.e.$$
\end{proof}

Having collected all necessary properties of $\Upsilon_{t,y}$ we shall prove the convergence of Bohmian trajectories (before caustic onset) in the next section.

\begin{remark} For completeness, we want to mention that $\Upsilon_{t,y}$ is indeed a probability measure
on $\R^d_x\times \R^d_p$ for a.e. $y, t$, provided 
the sequence $\{\psi^\e\}_{0<\e \le 1}$ is compact at infinity (tight), i.e.,
$$
\lim_{R\to \infty} \limsup_{\e \to 0_+} \int_{|x|\ge R}|\psi^\e(t, x)|^2\, dx = 0 . 
$$
Indeed if the latter holds true, it was shown in \cite[Lemma 3.2]{MPS1} that
\[
\lim_{\e \to 0_+}M^\e(t) \equiv \lim_{\e \to 0_+} \iint_{\R^{2d}} \beta^\e (t, dx, dp) = \iint_{\R^{2d}} \beta(t, dx, dp) ,
\]
and having in mind our normalization $M^\e(t)=1$, we conclude
$$
1 = \iint_{\R^{2d}} \beta(t, dx, dp) =  \iiint_{\R^{2d} \times {\R^d}}  \rho_0(y) \Upsilon_{t,y} (dx,dp ) \, dy,
$$
in view of \eqref{formula}. Define
$$
\alpha_{t,y} :=\iint_{\R^{2d}}  \Upsilon_{t,y} (dx,dp ) \le 1.
$$
Then, since $\int_{\R^d} \rho_0 (dy) =1$, we conclude $\alpha_{y,t} =1$ a.e.. However, we shall not use this property in the following.
\end{remark}

\section{Convergence of Bohmian trajectories before caustic onset}\label{sec:pre}

So far we have not specified the initial data $\psi_0^\e$ to be of WKB form.
By doing so, we can state the first main result of our work (recall the definition of sub-quadratic,
given in Assumption \ref{hypV}).

\begin{theorem} \label{th1} Let Assumptions \ref{hypV} hold, and let $\psi_0^\e$ be given in WKB form
\begin{equation}\label{WKBini1}
\psi^\e_0(x)= a_0(x) e^{i S_0(x)/\e},
\end{equation}
with amplitude $a_0\in \mathcal S(\R^d;\C)$ and sub-quadratic phase $S_0 \in C^\infty (\R^d;\R)$. 
Then, there exists a caustic onset time $T^*>0$ such that:

\emph{(i)} For all compact time-intervals $I_t\subset [0, T^*)$, the Bohmian measure $\beta^\e$ associated to $\rho^\e, J^\e$ satisfies
$$
\beta^\e  \stackrel{\e\rightarrow 0_+ }{\longrightarrow} \rho(t,x) \delta(p -\nabla S(t,x)), \ \text{in $L^\infty(I_t;\mathcal M^+(\R_x^d\times \R^d_p)) \, {\rm weak-}\ast$,}
$$
where $\rho\in C^\infty(I_t;\mathcal S(\R^d))$ and $S \in C^\infty (I_t\times \R^d)$ solve the WKB system \eqref{wkbcons}, \eqref{hj}.

\emph{(ii)} The corresponding Bohmian trajectories satisfy
$$
X^\e  \stackrel{\e\rightarrow 0_+ }{\longrightarrow} X, \quad P^\e  \stackrel{\e\rightarrow 0_+ }{\longrightarrow} P 
$$
locally in measure on $\{ I_t\times \supp \,\rho_0\}\subseteq \R_t\times \R^d_x$, where $\rho_0 = |a_0|^2$.
More precisely, for every $\delta>0$ and every 
Borel set $\Omega \subseteq \{ I_t \times \supp \, \rho_0\}$ with finite Lebesgue measure $\mathscr L^{d+1}$, it holds
\[
\lim_{\e\to 0} \mathscr L^{d+1} \big(\{(t,y)\in \Omega:\ | (X^\e (t,y),P^\e (t,y))-(X(t,y),P(t,y))| \ge \delta \} \big) = 0.
\]
\end{theorem}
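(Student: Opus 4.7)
The plan is to combine the classical WKB approximation with the Young-measure machinery developed in Section~\ref{sec:apriori}. By standard time-dependent WKB analysis (see, e.g., \cite{Ca, Ca1}), there exists $T^* > 0$ and smooth $(S, a) \in C^\infty([0,T^*) \times \R^d)$ solving \eqref{hj}--\eqref{wkbamp} with initial data $(S_0, a_0)$, together with the approximation $\| \psi^\e(t, \cdot) - a(t, \cdot) e^{iS(t, \cdot)/\e} \|_{L^2} = O(\e)$ uniformly on any compact $I_t \subset [0, T^*)$, and an analogous bound for $\e \nabla \psi^\e$. These estimates immediately yield $\rho^\e \to \rho := |a|^2$ and $J^\e \to \rho\, \nabla S$ (weak-$\ast$ in $L^\infty(I_t; \mathcal M^+(\R^d))$), together with $E^\e_{\rm kin}(t) \to \tfrac{1}{2}\int \rho |\nabla S|^2 dx$.

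For assertion (i), I would control the second moment of $\beta^\e$ about the graph $p = \nabla S$. Expanding
\[
\iint_{\R^{2d}} |p - \nabla S(t,x)|^2 \beta^\e(t, dx, dp) = \int |u^\e|^2 \rho^\e \, dx - 2 \int J^\e \cdot \nabla S\, dx + \int |\nabla S|^2 \rho^\e\, dx,
\]
the identity \eqref{Ekin} rewrites the first integral as $2 E^\e_{\rm kin} - \e^2 \int |\nabla \sqrt{\rho^\e}|^2 dx$; the $\e^2$ contribution vanishes (since $\sqrt{\rho^\e} \to |a|$ with $a$ smooth) and the remaining piece converges to $\int \rho|\nabla S|^2 dx$. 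The second and third terms converge to the same limit, so the three cancel and $\iint |p - \nabla S|^2 \beta^\e \to 0$. Combined with convergence of the zeroth moment and non-negativity of $\beta^\e$, a uniform-continuity argument on compactly supported test functions (with Chebyshev to control the tail $\{|u^\e - \nabla S| \ge \delta\}$) then forces $\beta^\e \to \rho\,\delta(p - \nabla S)$ in the claimed weak-$\ast$ sense.

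For assertion (ii), I insert part (i) into the disintegration formula \eqref{formula} to get
\[
\int_{\R^d} \Upsilon_{t,y}\, \rho_0(y)\, dy = \rho(t,x)\,\delta(p - \nabla S(t,x)).
\]
This forces $\supp \Upsilon_{t,y} \subset \{p = \nabla S(t,x)\}$ for $\rho_0 \otimes dt$-a.e.\ $(y,t)$, so we may factor $\Upsilon_{t,y}(dx, dp) = \upsilon_{t,y}(dx) \otimes \delta(p - \nabla S(t,x))$. Proposition \ref{proptrans} then reduces to the linear continuity equation $\partial_t \upsilon_{t,y} + \diver_x(\upsilon_{t,y}\, \nabla S) = 0$ with $\upsilon_{0,y} = \delta(x - y)$. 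Since $\nabla S$ is smooth and locally Lipschitz in $x$ on $I_t \times \R^d$, uniqueness for this measure-valued Cauchy problem yields $\upsilon_{t,y} = \delta(x - X(t, y))$, hence $\Upsilon_{t,y} = \delta_{(X(t,y),\, P(t,y))}$ (using $P(t,y) = \nabla S(t, X(t,y))$). The standard equivalence between a Young measure being a Dirac mass and convergence in measure of the generating sequence (\cite{Ba}) then delivers $(X^\e, P^\e) \to (X, P)$ locally in measure on $I_t \times \supp \rho_0$.

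I expect the main obstacle to be the vanishing $\int |u^\e - \nabla S|^2 \rho^\e\, dx \to 0$, which must be proved without any pointwise manipulation of $u^\e$ on the zero set of $\rho^\e$; identity \eqref{Ekin} is tailored precisely to bypass this issue, but it requires the WKB approximation to control $\e\nabla\psi^\e$ in $L^2$ and not merely $\psi^\e$ itself. A secondary subtlety is the measure-theoretic uniqueness for the continuity equation $\partial_t \upsilon + \diver_x(\nabla S\, \upsilon) = 0$ with Dirac initial data; this is standard for Lipschitz transport fields, but worth flagging since we propagate general positive measures $\upsilon_{t,y}$ rather than absolutely continuous densities.
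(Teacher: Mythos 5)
Your plan is correct overall. Part (ii) is essentially the paper's own argument: disintegration of $\Upsilon_{t,y}$ via \eqref{formula} onto the velocity graph $\{p=\nabla S\}$, identification of $\upsilon_{t,y}$ with a solution of the linear continuity equation via Proposition \ref{proptrans}, uniqueness for transport with smooth velocity to get $\upsilon_{t,y}=\delta(x-X(t,y))$, and the Dirac-Young-measure criterion for convergence in measure. The only cosmetic difference is that the paper, instead of abstractly invoking measure-theoretic uniqueness for the Cauchy problem, compares $\upsilon_{t,y}$ against the explicit push-forward formula \eqref{rho} and performs a change of variables; these are the same argument.

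For part (i) you take a genuinely different route. The paper relies on Grenier/Carles' modified WKB approximation to establish \emph{strong} $L^1$ convergence of both $\rho^\e$ and $J^\e$, and then simply cites \cite[Theorem 3.6]{MPS1}, which asserts mono-kineticity of the limiting Bohmian measure under that hypothesis. You instead control the second centered moment $\iint |p-\nabla S|^2\,\beta^\e$ directly via the kinetic-energy identity \eqref{Ekin} and energy conservation, forcing it to vanish, and then close with a Chebyshev-plus-modulus-of-continuity argument. This is more self-contained (it makes visible what the cited MPS1 theorem is really exploiting) at the cost of a more hands-on estimate; in particular you need uniform control of $\|\nabla a^\e\|_{L^2}$ and of the weighted quantity $\int \rho^\e|\nabla S|^2$, which, because $\nabla S$ grows linearly, already presupposes the $H^s$ convergence with decay that the modified WKB analysis of \cite{Ca} delivers. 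So both routes ultimately rest on the same underlying analytic input (the Grenier--Carles approximation on $[0,T^*)$); you just deduce mono-kineticity from it by a moment argument rather than by citing the MPS1 abstract theorem. Your flagged obstacles are the correct ones and are handled by the WKB input you assume.
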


Assertion (i) above was already proved in \cite{MPS1}, but since the obtained form of the limiting measure will be used to show (ii), 
we shall recall the proof of (i) for the sake of completeness. 
Assertion (ii) shows, that before caustic onset, the Bohmian trajectories converge locally in measure to the corresponding classical flow. 
Clearly, if $a_0(x)>0$ for all $x \in \R^d$, and thus $\supp \, \rho_0 =\R^d$, we obtain local in measure convergence of the 
Bohmian trajectories on all of $I_t \times \R^d_x$. After selecting an appropriate sub-sequence $\{ \e_n \}_{n\in \N}$ this also implies (see, e.g., \cite{Bo}) {\it almost everywhere convergence} on any finite 
subset of $I_t \times \R^d_x$.
Moreover, since, by definition, $\dot X^\e=P^\e$, the convergence in measure of $P^\e$ to $P$ combined with the $L^2$
bound from Lemma \ref{lem:apriori} implies that, for $\mathscr L^d$-a.e. $y$, the curves $X^\e(\cdot, y)$ converge
{\it uniformly} to $X(\cdot, y)$ on the time interval $I_t$.

\begin{proof}[Proof of Theorem \ref{th1}] We first note that \eqref{WKBini1} implies 
\[E^\e(0) =   \frac{1}{2} \int _{\R^d} |a_0|^2 | \nabla S |^2 dx + \frac{\e^2}{2} \int _{\R^d} |\nabla a_0|^2 dx + \int_{\R^d} V(x) |  a_0 |^2 dx. \]
Since $a_0 \in \mathcal S(\R^d)$, we see that Assumption \ref{hypini} is satisfied and
thus all the results established in Section \ref{sec:apriori} apply. In particular, we have the 
existence of a limiting Bohmian measure
$\beta\in L^\infty(\R_t; \mathcal M^+(\R^d_x\times \R^d_p)) \, {\rm weak-}\ast$. In order to prove Assertion (i) we need to show that before caustic onset, this limiting measure is 
given by a {\it mono-kinetic} phase space distribution, i.e.,
\be 
\label{mono} 
\beta(t,x,p)= \rho(t,x) \delta(p -\nabla S(t,x)).
\ee
In \cite{MPS1} sufficient conditions for $\beta$ being mono-kinetic have been derived. In particular, it is proved in there that 
\eqref{mono} holds as soon as one has strong $L^1$ convergence of $\rho^\e$ and $J^\e$ in 
the limit $\e \to 0_+$. To show that this is indeed the case, we shall rely on the so-called modified WKB approximation introduced in \cite{Gr} and further developed in \cite{Ca}:
Define a complex-valued amplitude $a^\e$ by setting 
\begin{equation}\label{compla}
a^\e(t,x) = \psi^\e(t,x) e^{ -i S(t,x) / \e},
\end{equation}
where $\psi^\e$ solves \eqref{sch} and $S$ is a smooth solution of the Hamilton-Jacobi equation \eqref{hj}. 
Next, we recall that the results of \cite{Ca} (see also \cite{Ca1}) ensure that 
under our assumptions there is a time $T^*>0$, independent of $x\in \R^d$, such that, for all compact subsets  $I_t\subset [0, T^*)$,
the Hamiltonian flow \eqref{classflow} is well-defined, and
there exists a unique (sub-quadratic) phase function $S \in C^\infty (I_t\times \R^d)$,
given by \eqref{phi}. Consequently, this also ensures the 
existence of a smooth amplitude $a\in C^\infty(I_t; \mathcal S(\R^d))$ given by \eqref{a}.

With this result in hand, a straightforward computation shows that $a^\e$, defined in \eqref{compla}, solves 
\be\label{modamp}
\partial_t a^\e + \nabla a^\e  \cdot \nabla S + \frac{a^\e}{2} \Delta S= i \frac{\e}{2} \Delta a^\e,\quad a^\e(0,x) = a_0(x).
\ee
This equation can be considered as a perturbation of \eqref{wkbamp}. Indeed, if we denote the difference by $w^\e := a^\e - a$, then $w^\e$ satisfies
\[
\partial_t w^\e + \nabla w^\e  \cdot \nabla S + \frac{w^\e}{2} \Delta S= i \frac{\e}{2} \Delta a^\e,\quad w^\e(0,x) = 0,
\]
where the source term on the right hand side is formally of order $\mathcal O(\e)$.
Invoking energy estimates, one can prove (see \cite[Proposition 3.1]{Ca})
that for any time-interval $I_t\subset [0,T^*)$, there exists a unique solution
$a^\e \in C(I_t; H^s(\R^d))$ of \eqref{modamp},
and that
\[
\| w^\e \|_{L^\infty(I_t; H^s(\R^d))}\equiv \| a^\e - a \|_{L^\infty(I_t; H^s(\R^d))} = \mathcal O(\e), \quad \forall \,s\ge0.
\]
Writing the mass and current densities as 
\[
\rho^\e = |\psi^\e|^2= |a^\e|^2, \quad J^\e = \e \im\big(\overline{\psi^\e}\nabla \psi^\e \big) = |a^\e|^2 \nabla S+ \e \im\big(\overline{a^\e}\nabla a^\e \big),
\]
and using the fact that $H^s(\R^d) \hookrightarrow L^\infty(\R^d)$ for $s> d/2$, this consequently implies
\[
\rho^\e  \stackrel{\e\rightarrow 0_+ }{\longrightarrow} \rho, \quad \text{in $L^\infty(I_t;L^1(\R^d))$ strongly,}
\]
and 
\[
J^\e  \stackrel{\e\rightarrow 0_+ }{\longrightarrow} \rho u, \quad \text{in $L^\infty(I_t;L_{\rm loc}^1(\R^d)^d)$ strongly,} 
\]
where $\rho=|a|^2$ and $u=\nabla S$ are smooth solutions of the WKB system:
\begin{align*}
\partial_t \rho + \diver_x (\rho u) = 0, \quad \quad \rho(0,x) = |a_0(x)|^2,\\
\partial_t u + u\cdot \nabla u + \nabla V(x) = 0, \quad \quad u(0,x) = \nabla S_0(x).
\end{align*}
In particular, we infer that $P(t,y) = \nabla S(t,X(t,y))= u(t,X(t,y))$ 
and, in view of \eqref{a}, we also have that the density $\rho=|a|^2$ is given by 
\begin{equation}\label{rho}
\rho(t,x) = \frac{\rho_0(Y(t,x))}{J_t(Y(t,x))}\, , \quad t\in [0,T^*).
\end{equation}
The strong convergence of $\rho^\e, J^\e$ together with \cite[Theorem 3.6]{MPS1} then directly imply that the limiting measure 
$\beta$ is given by \eqref{mono} and thus Assertion (i) is proved.

In order to prove (ii) we first note that for every fixed $t \in [0,T^*)$, the limiting measure $\beta (t)$ is carried by 
the set $$\mathscr G_t= \{(x,p)\in \R^{2d} :  p= u(t,x)\}.$$The identity \eqref{formula} then implies that a.e. in $y$ the measure $\Upsilon_{t,y}$ is also carried by the same set and 
we consequently infer
\[
\Upsilon_{t,y}(x,p) = \mu_{t,y}(x) \delta (p - u(t,x)),
\]
where $\mu_{t,y}$ is the Young measure associated to $X^\e(t,y)$.

By taking the zeroth moment of $\Upsilon_{t,y}$ with respect to $p\in \R^d$ we realize that indeed 
$\mu_{t,y} = \upsilon_{t,y}$, with $\upsilon_{t,y}$ defined in Proposition \ref{proptrans}. We thus find that $\mu_{t,y}$ solves, in the sense of distributions:
\[
\partial_ t \mu_{t,y} + \diver_x \left(u \, \mu_{t,y} \right) = 0, \quad \mu_{t=0,y}(x) = \delta(x-y),
\]
a.e. with respect to the measure $\rho_0(y)$. In other words, $\mu_{t,y}(x)$ solves the same transport equation as the limiting density $\rho(t,x)$ does. In view of \eqref{rho}, we therefore 
conclude that, before caustic onset, $\mu_{t,y}$ is given by
\[
\mu_{t,y} (x)= \frac{1}{J_t(Y(t,x))} \delta(Y(t,x) - y),\quad \rho_0 - a.e..
\]
Multiplying by a test function $\varphi \in C_0(\R^d_x\times \R^d_y)$ and performing the change of variable $x = Y(t,x)$, we consequently find
\[
\langle \mu_{t,y} , \varphi \rangle = \iint_{\R^{2d}} \frac{1}{J_t(Y(t,x))} \, \delta(Y(t,x) - y) \varphi(x,y) \, dx \, dy = \int_{\R^d} \varphi(X(t,y), y) \, dy,
\]
and thus we can also express $\mu_{t,y}  = \delta(x-X(t,y))$. In summary we obtain that
\begin{equation*}\label{delta}
\Upsilon_{t,y}(x,p) = \delta(x- X(t,y)) \delta (p - u(t,X(t,y))).
\end{equation*}
a.e. on $\supp \, \rho_0 \subseteq\R^d$. In other words, the Young measure $\Upsilon_{t,y}$ is supported in a single point (on phase space). By a well known result in measure theory, cf. \cite[Proposition 1]{Hu}, this is 
equivalent to the local in-measure convergence of the associated family of trajectories $X^\e, P^\e$ and we are done. 
\end{proof}
The proof in particular shows, that, at least before caustic onset, the Young measure $\Upsilon_{t,y}$ is independent of the choice of $\rho_0^\e$, even 
though the Bohmian flow $X^\e$ is not.

\begin{remark} 
It is certainly possible to obtain Theorem \ref{th1} under weaker regularity assumption on $V, a_0^\e$, and $S_0$, which are imposed here only for the sake of simplicity. 
The assumption of $V$ and $S_0$ being sub-quadratic, however, can not be relaxed, if one wants to guarantee the existence of a non-zero caustic onset time 
$T^*>0$ {\it uniformly} in $x\in \R^d$, see, e.g., \cite{Ca} for a counter-example. 
Explicit examples of initial phases $S_0$, for which $T^*=+\infty$ (i.e., no caustic) are 
easily found in the case $V(x)\equiv 0$. Namely, either plane waves: $S_0 (x) = k\cdot x$, where $k\in \R^d$ is a given wave vector, or $S_0(x) = - |x|^2$, 
yielding a rarefaction wave for $t\in \R_+$, see \cite{GaMa}. In these situations, we obtain 
in-measure convergence of the Bohmian trajectories $(X^\e, P^\e)$, and consequently also uniform convergence of $X^\e$, 
locally on every Borel set $\Omega \subseteq \{ \R_t \times \supp \, \rho_0\}$ with finite Lebesgue measure.
\end{remark}

\section{Superposition of WKB states and Bohmian measures}\label{sec:super}

\subsection{Bohmian measure for multi-phase WKB states} 

In view of Fig.~\ref{caustic}, we expect that for $|t|> T^*$, i.e., after caustic onset, the correct asymptotic description of $\psi^\e$ is given by 
a superposition of WKB states, also known as {\it multi-phase ansatz}. In order to gain more insight in situations where this is indeed the case 
we shall, as a first step, study the classical limit of the corresponding Bohmian 
measure. To this end, let $\Omega \subset  \R_t\times \R^d_x$ be some open set and consider $\psi^\e$ to be given in the following form:
\begin{equation}
\label{eq:WKBaftercaustics}
\psi^\e(t,x)=\sum_{j=1}^N b_j(t,x) e^{iS_j(t,x)/\e} +r_\e(t,x),
\end{equation}
where $b_j\in C^\infty (\Omega;\C)$ are some smooth amplitudes and the real-valued phases
$S_j\in C^\infty( \Omega;\R)$ locally solve
\begin{equation}
\label{eq:HJ}
\partial_t S_j+\frac12 |\nabla S_j|^2+V(x)=0 \qquad \text{for all $j=1,\ldots,N$,}
\end{equation}
In addition, $r_\e$ denotes a possible remainder term (the assumptions on which will be made precise in the theorem below). 

\begin{remark} As we shall see Section \ref{sec:free},
the multi-phase WKB form \eqref{eq:WKBaftercaustics} can be rigorously established, locally on every
connected component of 
$(\mathbb R_t\times \R_x^d)\setminus \mathscr C$, i.e., locally away from caustics.
\end{remark}

The second main results of this work establishes an explicit formula for the limiting Bohmian
measure $\beta$ associated to a wave function of the form
\eqref{eq:WKBaftercaustics}. More precisely we prove the following:

\begin{theorem}
\label{thm:after}
Let $\psi^\e$ be as in \eqref{eq:WKBaftercaustics}, with $b_j\in C^\infty (\Omega;\C)$, $S_j\in C^\infty(\Omega;\R)$, for all $j=1,\ldots,N$, 
where $\Omega\subset [0,T]\times \R^d$ denotes some open set. 
Assume, in addition, 
\begin{equation}
\label{eq:diff gradient}
\nabla S_j \neq \nabla S_k\qquad \text{ for all $j\neq k\in \{1, \ldots, N\}$},
\end{equation}
and that the remainder $r_\e(t,x)$ satisfies
\begin{equation}
\label{eq:rest}
\|r_\e\|_{L^2_{\rm loc}(\Omega)} =o(1),\qquad \| \e \nabla r_\e\|_{L^2_{\rm loc}(\Omega)} =o(1)\qquad \text{as $\e \to 0_+$.}
\end{equation}
Then 
$$
\beta^\e  \stackrel{\e\rightarrow 0_+ }{\longrightarrow} \beta(t,x,p), \ \text{in $L^\infty([0,T];\mathcal M^+(\R_x^d\times \R^d_p)) \, {\rm weak-}\ast$,}
$$
where $\beta$ is given by
$$
\beta(t,x,p)= \int_{\T^{N}} \Gamma(t,x,\theta)
\, \delta\biggl(p-\frac{\sum_{j,k=1}^N \nabla S_j(t,x) \Gamma_{j,k}(t,x,\theta) }{\Gamma(t,x,\theta)}
\biggr)\,d\theta.
$$
with $\theta=(\theta_1,\ldots,\theta_N) \in \T^N$, and
\begin{equation}
\label{eq:def gamma}
\Gamma(t,x,\theta):=\biggl|\sum_{j=1}^N b_j(t,x)  e^{i\theta_j} \biggr|^2,
\qquad
\Gamma_{j,k}(t,x,\theta):=\re\left(b_j \bar b_k e^{i(\theta_{j} - \theta_{k})} \right).
\end{equation}
\end{theorem}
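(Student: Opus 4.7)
The plan is to first dispose of the remainder $r_\e$, then express $\rho^\e$ and $J^\e$ as specific profile functions evaluated along a rapidly oscillating trajectory in the torus $\T^N$, and finally pass to the limit via an oscillatory averaging argument. As a preliminary reduction, I note that \eqref{eq:rest} combined with the Cauchy--Schwarz inequality implies that all cross-terms between $r_\e$ and $\sum_j b_j e^{iS_j/\e}$ in $\rho^\e=|\psi^\e|^2$ and $J^\e=\e\im(\overline{\psi^\e}\nabla\psi^\e)$ are $o(1)$ in $L^1_{\mathrm{loc}}(\Omega)$, as are the purely-$r_\e$ terms. Consequently the limit of $\beta^\e$ is unchanged if we replace $\psi^\e$ by $\widetilde\psi^\e:=\sum_{j=1}^N b_je^{iS_j/\e}$, so I will assume $r_\e\equiv 0$ from here on.

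A direct computation using $\e\nabla\widetilde\psi^\e=\sum_j(ib_j\nabla S_j+\e\nabla b_j)e^{iS_j/\e}$ yields, uniformly on compact subsets of $\Omega$,
\begin{equation*}
\rho^\e(t,x)=\Gamma\bigl(t,x,S(t,x)/\e\bigr),\qquad J^\e(t,x)=N\bigl(t,x,S(t,x)/\e\bigr)+O(\e),
\end{equation*}
where $S/\e:=(S_1/\e,\dots,S_N/\e)$ and $N(t,x,\theta):=\sum_{j,k=1}^N\nabla S_j(t,x)\,\Gamma_{j,k}(t,x,\theta)$. Testing $\beta^\e$ against a product $\eta(t,x)\varphi(x,p)$ with $\eta\in C_c(\Omega)$, $\varphi\in C_c(\R^d\times\R^d)$, I obtain
\begin{equation*}
\langle\beta^\e,\eta\varphi\rangle=\int_\Omega\eta(t,x)\,F\bigl(t,x,S(t,x)/\e\bigr)\,dt\,dx+o(1),
\end{equation*}
where $F(t,x,\theta):=\Gamma(t,x,\theta)\,\varphi\bigl(x,N(t,x,\theta)/\Gamma(t,x,\theta)\bigr)$, extended by zero on $\{\Gamma=0\}$. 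The identity $\Gamma=\bigl|\sum_j b_je^{i\theta_j}\bigr|^2$ forces $N=\re\Bigl(\bigl(\sum_j\nabla S_jb_je^{i\theta_j}\bigr)\overline{\sum_k b_ke^{i\theta_k}}\Bigr)=0$ whenever $\Gamma=0$, so the extension is continuous, and the bound $|F|\le\|\varphi\|_\infty\Gamma$ makes $F$ a bounded continuous function on $\Omega\times\T^N$.

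The heart of the argument, and the main obstacle, is the oscillatory averaging identity
\begin{equation*}
F\bigl(t,x,S(t,x)/\e\bigr)\ \rightharpoonup\ \int_{\T^N}F(t,x,\theta)\,d\theta\quad\text{in }L^1_{\mathrm{loc}}(\Omega),\ \e\to 0_+.
\end{equation*}
I plan to establish it by Fourier expansion in $\theta$: since $\Gamma$ and all $\Gamma_{j,k}$ depend only on the differences $\theta_j-\theta_k$, so does $F$, and hence its Fourier series is supported on multi-indices $\alpha\in\Z^N$ with $\sum_j\alpha_j=0$. For each nonzero such $\alpha$, substituting $\theta=S(t,x)/\e$ produces an oscillatory integral with phase $\e^{-1}\sum_j\alpha_jS_j(t,x)$; by \eqref{eq:diff gradient} the gradient of this phase is nonvanishing for the elementary modes $\alpha=e_j-e_k$, and a non-stationary phase / integration-by-parts argument in $x$ kills the corresponding contribution as $\e\to 0_+$. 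Only the $\alpha=0$ mode survives and yields the torus average. Combining with the previous step,
\begin{equation*}
\langle\beta^\e,\eta\varphi\rangle\longrightarrow\int_\Omega\eta(t,x)\int_{\T^N}\Gamma(t,x,\theta)\,\varphi\!\left(x,\frac{\sum_{j,k}\nabla S_j\,\Gamma_{j,k}}{\Gamma}\right)d\theta\,dt\,dx,
\end{equation*}
which coincides with the pairing of $\eta\varphi$ against the measure $\beta$ of the statement (after rewriting $\varphi(x,\,\cdot\,)$ through the $\delta$-function in $p$). Density of product test functions $\eta\varphi$ in $C_c([0,T]\times\R^d\times\R^d)$ then delivers the claimed weak-$\ast$ convergence.
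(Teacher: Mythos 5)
Your overall strategy — discard the remainder, write $\rho^\e,J^\e$ as profile functions on $\T^N$ evaluated along $\theta=S(t,x)/\e$, and pass to the limit by oscillatory averaging — is the same as the paper's, which establishes the averaging via two-scale convergence. Two points, one minor and one serious.

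The minor point is the reduction to $\widetilde\psi^\e$. It is not enough that the cross and remainder contributions to $\rho^\e$ and $J^\e$ are $o(1)$ in $L^1_{\rm loc}$: the Bohmian measure pairs $\rho^\e\,\varphi(x,J^\e/\rho^\e)$, and the ratio is ill-behaved where $\rho^\e$ is small. The paper handles this by observing that $(s,v)\mapsto s\,\varphi(t,v/s)$ is Lipschitz (uniformly in $t$) when $\varphi\in C_c^\infty$, which converts $L^1$ smallness of the errors in $(\rho^\e,J^\e)$ into $L^1$ smallness of the error in the paired quantity. Your write-up skips this justification.

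The serious gap is in the averaging step. You expand $F(t,x,\theta)=\Gamma\,\varphi(x,N/\Gamma)$ in Fourier modes $\alpha\in\Z^N$ with $\sum_j\alpha_j=0$, and kill the nonzero modes by non-stationary phase, invoking \eqref{eq:diff gradient} to get $\nabla(\alpha\cdot S)\neq 0$. That only works for the elementary modes $\alpha=e_j-e_k$. Because $\varphi$ is a \emph{nonlinear} function of $N/\Gamma$, the Fourier spectrum of $F$ is not confined to elementary modes: it contains arbitrary $\alpha$ with $\sum_j\alpha_j=0$ (indeed all of them, generically). For such $\alpha$, condition \eqref{eq:diff gradient} does \emph{not} imply $\nabla(\alpha\cdot S)\neq 0$; e.g.\ for $\alpha=(2,-1,-1)$ one has $\nabla(\alpha\cdot S)=(\nabla S_1-\nabla S_2)+(\nabla S_1-\nabla S_3)$, which can vanish even when the pairwise differences are nonzero. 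The paper's proof closes this gap with Lemma~\ref{lem:rational}, whose proof uses precisely the Hamilton--Jacobi relation \eqref{eq:HJ} together with strict convexity of $|\cdot|^2$ to show that the exceptional sets where integer combinations of the phase differences degenerate have $\mathscr L^{d+1}$-measure zero, leading to equidistribution on $\T^{N-1}$. Your proposal never uses \eqref{eq:HJ}, which is one of the standing hypotheses on $\psi^\e$ in \eqref{eq:WKBaftercaustics}; without it, the oscillatory averaging identity you rely on is false in general. (A related technical point: since $F$ is only continuous and not smooth across $\{\Gamma=0\}$, its Fourier series need not converge absolutely, so even the elementary-mode argument requires some care to make the integration by parts legitimate.)
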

The above formula for $\beta$ generalizes equation (6.6) given in \cite{MPS1} and states that $\beta$ in general is a {\it diffuse measure} in the momentum variable $p \in \R^d$, unless 
all but one $b_j=0$. 
Note that, in the case where $N=1$, $\beta$ simplifies to a mono-kinetic phase space measure, i.e.,
$$
\beta(t,x,p) = |b(t,x)|^2 \delta(p-\nabla S(t,x)).
$$
We already know from Assertion (i) of Theorem \ref{th1} that this holds for $|t|<T^*$, i.e., before caustic onset.

\begin{proof} By our assumptions, it is easy to check that 
$\rho^\e=|\psi^\e|^2= \tilde \rho^\e+r_{1,\e}$ and $J^\e= \e \im\big(\overline{\psi^\e}(t,x)\nabla \psi^\e(t,x)\big)=
\tilde J^\e+r_{2,\e}$, where
$$
\tilde \rho^\e:= \sum_{j,k=1}^N b_j \bar b_k e^{i(S_j-S_k)/\e},\quad \tilde J^\e:=
\sum_{j,k=1}^N \nabla S_j \, \re\left(b_j \bar b_k e^{i(S_j-S_k)/\e}\right).
$$
and
$$
\|r_{1,\e}\|_{L^1_{\rm loc}(\Omega)} =o(1),
\qquad \|r_{2,\e}\|_{L^1_{\rm loc}(\Omega)} =o(1).
$$
In order to derive the classical limit as $\e \to 0_+$ of the Bohmian measure $\beta^\e$, we need to compute the limit of expressions of the following form
\begin{equation}
\label{eq:beta eps}
\iint_{\Omega} \sigma(t,x) \rho^\e(t,x) \varphi\Bigl(t,\frac{J^\e(t,x)}{\rho^\e(t,x)} \Bigr)\,dx\,dt,
\end{equation}
where $\varphi,\sigma \in C_{\rm c}^\infty([0,T]\times \R^d;  \R)$ are smooth and compactly supported. To this end, we first note that, because $\varphi$ is smooth and compactly supported, the map
$$
\R^+ \times \R^d \ni (s,v) \mapsto s \varphi \Bigl(t,\frac{v}{s} \Bigr)
$$
is Lipschitz (uniformly with respect to $t$), which implies
$$
\biggl\|\rho^\e \varphi \Bigl(t,\frac{J^\e}{\rho^\e} \Bigr)-
\tilde \rho^\e \varphi \Bigl(t,\frac{\tilde J^\e}{\tilde \rho^\e} \Bigr)\biggr\|_{L^1_{\rm loc}(\Omega)}
\le C \Bigl(\|r_{1,\e}\|_{L^1_{\rm loc}(\Omega)} + \|r_{2,\e}\|_{L^1_{\rm loc}(\Omega)} \Bigr)=o(1).
$$
In particular, to compute the limit as $\e \to 0_+$ of the expression in \eqref{eq:beta eps}
it suffices to consider
\begin{equation}
\label{eq:beta eps2}
\iint_{\Omega} \sigma(t,x) \tilde \rho^\e(t,x) \varphi \Bigl(t,\frac{\tilde J^\e(t,x)}{\tilde \rho^\e(t,x)} \Bigr)\,dx\,dt.
\end{equation}
We now use the following result, whose proof is postponed to the end.

\begin{lemma}
\label{lem:rational}
There exists a set $\Sigma\subset \Omega$ of $\mathscr L^{d+1}$-measure zero such that, for all $j,k,\ell \in \{1,\ldots,N\}$
with $k \neq \ell$,
$$
\frac{S_j(t,x)-S_k(t,x)}{S_j(t,x)-S_\ell(t,x)} \not \in \Q \qquad \text{for all
$(t,x) \in \Omega \setminus \Sigma$}.
$$ 
\end{lemma}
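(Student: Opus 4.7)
The plan is to write the bad set $\Sigma$ as a countable union of zero sets of smooth functions, and to use the Hamilton--Jacobi equation to show that each such zero set is in fact a smooth hypersurface of $\Omega \subset [0,T]\times\R^d$, hence of $\mathscr{L}^{d+1}$-measure zero. For each triple $(j,k,\ell)$ with $j\neq k$ and $k\neq\ell$, and each $q\in\Q$, introduce
\[
f_{j,k,\ell,q}(t,x):=(1-q)\,S_j(t,x)-S_k(t,x)+q\,S_\ell(t,x),
\]
so that, on $\Omega\setminus\{S_j=S_\ell\}$, the equation $(S_j-S_k)/(S_j-S_\ell)=q$ is equivalent to $f_{j,k,\ell,q}=0$. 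It therefore suffices to show that each set $\{f_{j,k,\ell,q}=0\}$, together with each hypersurface $\{S_j=S_\ell\}$ (for $j\neq \ell$), has $\mathscr{L}^{d+1}$-measure zero; taking the countable union over $(j,k,\ell,q)$ will then yield the desired $\Sigma$.

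The heart of the matter is to show that $f:=f_{j,k,\ell,q}$ admits no space-time critical point in $\Omega$. Assume that $\nabla_x f(t,x)=0$ and $\partial_t f(t,x)=0$ at some point. The spatial condition reads
\[
\nabla S_k=(1-q)\nabla S_j+q\,\nabla S_\ell.
\]
Because the coefficients $(1-q,-1,q)$ sum to zero, the potential $V$ drops out of the corresponding combination of the Hamilton--Jacobi equations \eqref{eq:HJ}, giving
\[
\partial_t f=-\tfrac{1}{2}\bigl[(1-q)|\nabla S_j|^2-|\nabla S_k|^2+q|\nabla S_\ell|^2\bigr].
\]
Setting $\partial_t f=0$ and substituting the expression for $\nabla S_k$ obtained above, an elementary expansion of the square yields
\[
q(1-q)\,|\nabla S_j-\nabla S_\ell|^2=0.
\]
For $q\in\Q\setminus\{0,1\}$ and $j\neq\ell$, this contradicts hypothesis \eqref{eq:diff gradient}. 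In the endpoint cases $q\in\{0,1\}$, the function $f$ reduces to $S_j-S_k$ or $S_\ell-S_k$, whose spatial gradient is nowhere zero on $\Omega$ by \eqref{eq:diff gradient}; and if $j=\ell$ then $f=S_j-S_k$ again has nonvanishing spatial gradient.

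Consequently, $f_{j,k,\ell,q}$ has no critical point in $\Omega$, so by the implicit function theorem its zero set is a smooth embedded hypersurface, in particular $\mathscr{L}^{d+1}$-negligible. The same argument, applied to $S_j-S_\ell$, disposes of the sets where the denominator vanishes, and taking the countable union over $(j,k,\ell,q)$ produces the desired null set $\Sigma$. The subtle point of the argument is the cancellation of $V$ in the relevant linear combination of Hamilton--Jacobi equations: without it one could not exclude space-time critical points of $f_{j,k,\ell,q}$, and the rational level sets could in principle accumulate on sets of positive Lebesgue measure.
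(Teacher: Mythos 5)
Your proof is correct and takes essentially the same route as the paper's: decompose the bad set into a countable union of level sets, then show each defining function has no space-time critical point by combining the spatial condition with the Hamilton--Jacobi equation (observing that $V$ cancels) and an algebraic identity equivalent to strict convexity of $|\cdot|^2$. The only differences are cosmetic -- you parametrize by $q\in\Q$ where the paper uses integer pairs $(m,n)$ with $m\neq n$, you expand the square explicitly rather than invoking convexity by name, and you handle the endpoint/denominator cases ($q\in\{0,1\}$, $j=\ell$, $\{S_j=S_\ell\}$) somewhat more carefully than the printed argument.
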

Using this lemma, we deduce that for $\mathscr L^{d+1}-a.e.$ $(t,x)$, the frequencies
$$\frac{S_1(t,x)-S_k(t,x)}{\e}, \qquad k = 2, \dots, N,$$ are
all rationally independent, which implies that the ``trajectories"
$$
\e \mapsto \left(\cos\Bigl(\frac{S_2-S_1}\e\Bigr), \ldots, \cos\Bigl(\frac{S_N-S_1}\e\Bigr)\right)
$$
and
$$
\e \mapsto \left(\sin\Bigl(\frac{S_2-S_1}\e\Bigr), \ldots, \sin\Bigl(\frac{S_N-S_1}\e\Bigr)\right)
$$
are both dense on the $(N-1)$-dimensional torus $\T^{N-1}$. By standard results
on two-scale convergence (see for instance \cite{Al}), we consequently obtain that for any continuous and compactly supported test function
$\vartheta: \Omega \times \C^{N-1} \to \R$,
\begin{multline*}
\int_{\Omega} \vartheta
\left(t,x,e^{i(S_2-S_1)/\e}, \ldots, e^{i(S_N-S_1)/\e}\right)\,dx\,dt\\
  \stackrel{\e\rightarrow 0_+ }{\longrightarrow} \int_{\Omega}\int_{\T^{N-1}} \vartheta\left(t,x,e^{i\theta_1},
\ldots,e^{i\theta_{N-1}}\right)
\,d\theta_1\ldots d\theta_{N-1}\,dx\,dt.
\end{multline*}
Moreover, we observe that for any $j,k$ we can write
$$
\frac{S_j-S_k}\e = \frac{S_j-S_1}\e+\frac{S_1-S_k}\e.
$$
Hence the expression in \eqref{eq:beta eps2} converges to
\begin{multline*}
\iint_{\Omega} \sigma(t,x) \int_{\T^{N-1}} \sum_{j,k=1}^N b_j \bar b_k e^{i(\theta_{j-1} - \theta_{k-1})}\\
\varphi\Biggl(t,\frac{\sum_{j,k=1}^N\nabla S_j\,
\re\left(b_j \bar b_k e^{i(\theta_{j-1} - \theta_{k-1})}\right)}{\sum_{j,k=1}^N b_j \bar b_k e^{i(\theta_{j-1} - \theta_{k-1})}}
\Biggr)\,d\theta_1\ldots d\theta_{N-1}\,dx\,dt,
\end{multline*}
where by convention $\theta_0\equiv 0$. 
Finally, let us observe that one can also rewrite the obtained expression in a more symmetric form by performing the change of variables
$\theta_{j-1} \leftrightarrow \theta_j -\theta_1$, and 
it is immediate to check that under this transformation the above expression is equal to
\begin{align*}
\iint_{\Omega} \sigma(t,x) \int_{\T^{N}} \Gamma(t,x,\theta)\,
\varphi\biggl(t,\frac{\sum_{j,k=1}^N\nabla S_j \Gamma_{j,k}(t,x,\theta)}{\Gamma(t,x,\theta)}
\biggr)\,d\theta\,dx\,dt,
\end{align*}
where $\theta=(\theta_1,\ldots,\theta_N)$, and $\Gamma$ and $\Gamma_{j,k}$ are defined in \eqref{eq:def gamma}.
By the arbitrariness of $\varphi$ and $\sigma$, this proves the desired result.
\end{proof}

We are now left with the proof of Lemma \ref{lem:rational}.
\begin{proof}[Proof of Lemma \ref{lem:rational}]
The set $\Sigma$ can be described as 
$$
\bigcup_{j,k,\ell,\, k \neq \ell} \bigcup_{m\neq n \in \Z} S_{j,k,\ell}^{m,n},
$$
where
$$
S_{j,k,\ell}^{m,n}:=\bigl\{(t,x) \in \Omega\,:\,m[S_j(t,x)-S_k(t,x)]+n[S_j(t,x)-S_\ell(t,x)]=0\bigr\}.
$$
We now claim that each $S_{j,k,\ell}^{m,n}$ is a smooth hypersurface in $\Omega$, which
implies in particular that $S_{j,k,\ell}^{m,n}$ (and so also $\Sigma$) has measure zero. 
To prove that this is indeed the case, it suffices to check, in view of the implicit function theorem, that the gradient of the function
$$
(t,x)\mapsto m[S_j(t,x)-S_k(t,x)]+n[S_j(t,x)-S_\ell(t,x)]
$$
is nowhere zero. Assume by contradiction that this is not the case, i.e., there exists a point $(t,x) \in \Omega$ where
$$
(m+n)\partial_t S_j(t,x)=m\partial_tS_k(t,x)+n\partial_tS_\ell(t,x),
$$
$$
(m+n)\nabla S_j(t,x)=m\nabla S_k(t,x)+n\nabla S_\ell(t,x).
$$
By \eqref{eq:HJ}, the first equation above becomes
$$
(m+n)|\nabla S_j(t,x)|^2=m|\nabla S_k(t,x)|^2+n|\nabla S_\ell(t,x)|^2,
$$
which combined with the second equation gives
$$
(m+n)\left|\frac{m}{m+n}\nabla S_k(t,x)+\frac{n}{m+n}\nabla S_\ell(t,x)\right|^2=
m|\nabla S_k(t,x)|^2+n|\nabla S_\ell(t,x)|^2.
$$
By strict convexity of $|\cdot|^2$, the above relation is possible if and only if $\nabla S_k(t,x)=\nabla S_\ell(t,x)$,
which contradicts \eqref{eq:diff gradient} and concludes the proof.
\end{proof}

\subsection{Comparison to Wigner measures} \label{sec:wig}

An important consequence of Theorem \ref{thm:after} concerns the connection between the limiting Bohmian measure $\beta$ and the 
Wigner measure $w\in \mathcal M^+(\R^d_x\times \R^d_p)$ associated to $\psi^\e$. To this end, let us first recall the definition of the 
$\e$-scaled Wigner transform $w^\e$ given in \cite{ALMS,GMMP, LiPa}:
\begin{equation*}
w^\e (t,x,p): = \frac{1}{(2\pi)^d} \int_{\R^d}
\psi^\e\left(t,x-\frac{\e}{2}\eta
\right)\overline{\psi^\e} \left(t,x+\frac{\e}{2}\eta
\right)e^{i \eta \cdot p}\,  d\eta. 
\end{equation*}
Provided $\psi^\e(t)$ is uniformly bounded in $L^2$ with respect to $\e$, it is well known that, cf. \cite{GMMP, LiPa} there exists a limit $w(t,x,p)$ such that
$$
w^\e  \stackrel{\e\rightarrow 0_+ }{\longrightarrow} w, \ \text{in $L^\infty(\R_t;\mathcal M^+(\R_x^d\times \R^d_p)) \, {\rm weak-}\ast$.}
$$
In addition, one finds $w(t) \in \mathcal M^+(\R^d_x \times \R^d_p)$, usually called \emph{Wigner measure} (or semi-classical defect measure). 
The latter is known to give the possibility to compute the classical limit of all expectation values of physical observables via
\begin{equation*}
\lim_{\e\to 0} \langle \psi^\e(t), \text{Op}^\e(a)\psi^\e(t)\rangle_{L^2(\R^d)} =  \iint_{\R_{x,p}^{2d}} a(x,p) w(t, x,p) \, dx \, dp,
\end{equation*}
where the $ \text{Op}^\e(a)$ is a self-adjoint operator obtained by Weyl-quantization of the corresponding classical symbol 
$a\in \mathcal S(\R^d_x \times \R^d_p)$, see \cite{GMMP, SMM} for a precise definition. 
In addition, if $\psi^\e(t)$ is $\e$-oscillatory, i.e., satisfies \eqref{epsosc}, we also have that the zeroth and first $p$-moment of $w$ yield the classical limit of 
$\rho^\e$ and $J^\e$, i.e.,
\[
\rho^\e(t,x) \stackrel{\e\rightarrow 0_+ }{\longrightarrow} \int_{\R^d}w(t, x, dp), \quad J^\e(t,x) \stackrel{\e\rightarrow 0_+ }{\longrightarrow} \int_{\R^d} p w(t, x, dp),
\]
where the limits have to be understood in $L^\infty(\R_t; \mathcal M^+(\R^d_x)) \, {\rm weak-}\ast$. Note that this is indeed {\it analogous} to \eqref{Bmoment}.

For a given superposition of WKB states such as \eqref{eq:WKBaftercaustics},
the associated Wigner measure has been computed in \cite{LiPa} (see also \cite{SMM}):
under the same assumption on the phases, i.e., $\nabla S_j \neq \nabla S_k$ for all $j\neq k$, one explicitly finds
\begin{equation}\label{wigdelta}
w(t,x,p) = \sum_{j=1}^N |b_j(t,x)|^2 \delta(p-\nabla S_j(t,x)).
\end{equation}
>>From this explicit formula we immediately conclude the following important corollary.

\begin{corollary} \label{cor:wig}
Let $b_j \not =0$. Then, under the same assumptions as in Theorem \ref{thm:after} we have that, in the sense of measures, $\beta = w$ if and only if $N=1$.
\end{corollary}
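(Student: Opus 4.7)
Splitting into the two implications, the direction $N=1\Rightarrow\beta=w$ is immediate: from \eqref{eq:def gamma} one has $\Gamma(t,x,\theta)=\Gamma_{1,1}(t,x,\theta)=|b_1(t,x)|^2$ independent of $\theta\in\T^1$, so the formula in Theorem \ref{thm:after} collapses to $\beta(t,x,p)=|b_1|^2\delta(p-\nabla S_1)$, matching \eqref{wigdelta}. My plan is to prove the converse by contradiction: assume $N\ge 2$, all $b_j\ne 0$, and $\beta=w$, and derive an algebraic impossibility.

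Fix a base point $(t_0,x_0)\in\Omega$, suppress it from the notation, and consider the map
\[
p(\theta):=\frac{\sum_{j,k=1}^N\nabla S_j\,\Gamma_{j,k}(\theta)}{\Gamma(\theta)},\qquad \theta\in U:=\{\Gamma>0\}\subset\T^N.
\]
By Theorem \ref{thm:after}, $\beta(t_0,x_0,\cdot)$ is the push-forward of the measure $\Gamma(\theta)\,d\theta$ under $p$, while \eqref{wigdelta} expresses $w(t_0,x_0,\cdot)$ as a sum of $N$ atoms of mass $|b_j|^2$ at the distinct points $\nabla S_j$. The equality $\beta=w$ then forces, for every $j_0\in\{1,\ldots,N\}$, the level set $\{\theta\in U:p(\theta)=\nabla S_{j_0}\}$ to carry Lebesgue measure at least $|b_{j_0}|^2>0$. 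The pivotal step is to rewrite this level condition in purely polynomial form, clearing the denominator $\Gamma$: on $U$ the identity $p(\theta)=\nabla S_{j_0}$ is equivalent to the vanishing of the $\R^d$-valued trigonometric polynomial
\[
Q_{j_0}(\theta):=\sum_{j=1}^N(\nabla S_j-\nabla S_{j_0})\sum_{k=1}^N\Gamma_{j,k}(\theta),
\]
which is real-analytic on all of $\T^N$.

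Since $U$ has full Lebesgue measure in $\T^N$ (the function $f(\theta)=\sum_j b_j e^{i\theta_j}$ is a nontrivial real-analytic complex-valued function, so $f^{-1}(0)$ is negligible) and $\{Q_{j_0}=0\}\cap U$ has positive measure, the zero set of $Q_{j_0}$ has positive measure, and by real-analyticity $Q_{j_0}\equiv 0$ on $\T^N$. Now expand $\Gamma_{j,k}(\theta)=\tfrac12\bigl(b_j\bar b_k e^{i(\theta_j-\theta_k)}+\bar b_j b_k e^{-i(\theta_j-\theta_k)}\bigr)$ and read off the Fourier coefficient of $e^{i(\theta_{j_1}-\theta_{j_0})}$ in $Q_{j_0}$ for any chosen $j_1\ne j_0$: this mode is produced only by the index pairs $(j,k)=(j_1,j_0)$ and $(j,k)=(j_0,j_1)$, and the latter is killed by its outer prefactor $\nabla S_{j_0}-\nabla S_{j_0}=0$, leaving the single surviving contribution $\tfrac12\,b_{j_1}\bar b_{j_0}(\nabla S_{j_1}-\nabla S_{j_0})$. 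Setting this to zero and using $b_{j_1}\bar b_{j_0}\ne 0$ yields $\nabla S_{j_1}=\nabla S_{j_0}$, contradicting \eqref{eq:diff gradient}. The main delicate point I foresee is the Fourier bookkeeping in this last step: one must verify that no index pair other than $(j_1,j_0)$ and $(j_0,j_1)$ contributes to the target mode, which is immediate because each $\Gamma_{j,k}$ only carries the modes $e^{\pm i(\theta_j-\theta_k)}$. Everything else is a direct computation or the standard fact that a nontrivial real-analytic function cannot vanish on a positive-measure set.
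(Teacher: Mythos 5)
Your proof is correct, and it takes a genuinely different --- and more careful --- route than the paper's. The paper dismisses the ``only if'' direction with a one-liner: with all $b_j\neq 0$ and $N>1$, ``$\beta$ is a diffuse measure in $p$'', hence cannot equal the purely atomic Wigner measure \eqref{wigdelta}. Read literally this assertion is not always true: for $N=2$ and $|b_1|=|b_2|$, the momentum map $\theta\mapsto p(\theta)$ appearing in Theorem \ref{thm:after} is identically $(\nabla S_1+\nabla S_2)/2$, so $\beta(t,x,\cdot)$ collapses to a single Dirac at the midpoint --- atomic, though of course still $\neq w$. Your push-forward argument handles all such degenerate cases uniformly: assuming $\beta=w$, each level set $\{p=\nabla S_{j_0}\}$ must have positive Lebesgue measure, clearing the denominator produces the real-analytic trigonometric polynomial $Q_{j_0}$ with a positive-measure zero set, real-analyticity forces $Q_{j_0}\equiv 0$ on the connected manifold $\T^N$, and extracting the $e^{i(\theta_{j_1}-\theta_{j_0})}$ Fourier coefficient contradicts \eqref{eq:diff gradient}. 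The Fourier bookkeeping you flag as delicate is indeed right: each $\Gamma_{j,k}$ carries only the modes $e^{\pm i(\theta_j-\theta_k)}$, so the lattice vector $e_{j_1}-e_{j_0}\in\Z^N$ arises only from $(j,k)\in\{(j_1,j_0),(j_0,j_1)\}$, and the second pair is annihilated by the prefactor $\nabla S_{j_0}-\nabla S_{j_0}=0$, leaving exactly $\tfrac12 b_{j_1}\bar b_{j_0}(\nabla S_{j_1}-\nabla S_{j_0})$. The only small slip is the phrase ``carry Lebesgue measure at least $|b_{j_0}|^2$'': the push-forward measure is $\Gamma\,d\theta$, not $d\theta$, so equality $\beta=w$ actually yields $\int_{\{p=\nabla S_{j_0}\}}\Gamma\,d\theta=|b_{j_0}|^2$; but since $\Gamma\le\bigl(\sum_j|b_j|\bigr)^2$ on $\T^N$ this still implies the level set has positive Lebesgue measure, which is all your argument uses.
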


\begin{proof} For $b_j\not =0$ and $N>1$ we see from Theorem \ref{thm:after} that $\beta$ is a diffuse measure in the momentum variable $p\in \R^d$, and thus $\beta \not = w$ in view of \eqref{wigdelta}. 
On the other hand, if $N=1$ then, both $w$ and $\beta$ simplify to the same mono-kinetic phase space distribution.
\end{proof}

In view of Assertion (i) of Theorem \ref{th1} we conclude that before caustic onset, the classical limit of all physical observables can be computed by taking 
moments of the limiting Bohmian measure, since in fact $\beta = w$ for $|t|<T^*$. After caustic onset, however, this is in general \emph{no longer} the case (see Section \ref{sec:free}). 

Still, we do know (by weak compactness arguments) that the {\it zeroth} and {\it first moments}
w.r.t. $p\in \R^d$ of $\beta$ and $w$ are {\it the same} for all times $t\in \R$. 
For completeness, we check this explicitly in the case of multi-phase WKB states: using the fact that
$$
\int_{\T}\cos\left(\theta\right)\,d\theta=\int_{\T}\sin\left(\theta\right)\,d\theta=0 ,
$$
we compute
\begin{align*}
 \int_{\R^d} \beta(t,x,dp)&=\sum_{j, k=1}^N
b_j  (t,x) \bar b_k (t,x) \int_{\T^N} e^{i(\theta_{j}-\theta_{k})}\,d\theta_1\ldots d\theta_N\\
&=\sum_{j=1}^N |b_j(t,x)|^2
=\int_{\R^d} w(t,x,dp).
\end{align*}
Moreover
\begin{align*}
 \int_{\R^d} p \beta(t,x,dp)&=\sum_{j, k=1}^N
\nabla S_j(t,x) \int_{\T^N} \re\left( b_j  (t,x) \bar b_k (t,x)  e^{i(\theta_{j}-\theta_{k})}\right)\,d\theta_1\ldots d\theta_N\\
&=\sum_{j=1}^N \nabla S_j(t,x) |b_j(t,x)|^2
=\int_{\R^d} p w(t,x,dp).
\end{align*}
In other words, in the case of multi-phase
WKB states, the difference between $w$ and $\beta$ can only manifest itself in $p$-moments of order two or higher.

\section{A complete description in the free case and possible extensions}\label{sec:free}

In this section we shall give a (fairly) complete description of the classical limit of Bohmian dynamics in the case of the {\it free Schr\"odinger equation} corresponding to $V=0$. The proof will  
rely on classical stationary phase techniques. For the case $V\not=0$ 
decisively more complicated methods based on Fourier integral operators have to be employed, as will be discussed in Section \ref{sec:FIO}.

\subsection{Multi-phase WKB for vanishing potential} Consider the free Schr\"odinger equation with WKB initial data:
\begin{equation}
\label{schfree}
i \e \partial_t  \psi^\e   + \frac{\e^2}{2}\Delta \psi^\e =0,\quad \quad 
\psi^\e|_{t=0}   = a_{0}(x) e^{i S_0(x)/\e},
\end{equation} 
In this case, we find the free Hamilton-Jacobi equation, which is obviously given by
\begin{equation}\label{freehj}
\partial S + \frac{1}{2} |\nabla S|^2 = 0, \quad S|_{t=0} = S_0,
\end{equation}
and the corresponding classical Hamiltonian equations \eqref{classflow} simplify to
\begin{equation}\label{freeflow}
\left \{
\begin{aligned}
& \,  \dot X(t,y)= P(t,y) , \quad X(0,y)=y,\\
& \, \dot P(t,y)= 0 , \quad P(0,y)=\nabla S_0(y).
\end{aligned}
\right. 
\end{equation}
This implies that, for all $t\in \R_+$, $P(t,y) = \nabla S_0(y)$ and 
\begin{equation}\label{freeX}
 X(t,y) = y + t \nabla S_0(y) .
\end{equation}
Consequently, the caustic set is given by $ \mathscr  C_{\rm free}= \{(x,t) : x\in  \mathscr  C^{\rm free}_t \}$ where for $x\equiv X(t,y)$ we set:
\[
 \mathscr{ C}^{\rm free}_t =
\bigl\{x\in \R^d \, : \, \exists \text{ $y\in \R^d$ satisfying \eqref{freeX} and $\det(  {\rm Id} + t \nabla^2 S_0(y)) = 0$} \bigr\}.
\]
In particular, we see that in the free case, the caustic onset time $T^*>0$ is solely determined by the (sub-quadratic) initial phase $S_0(y)$. 
In order to proceed we need to slightly strengthen our assumption on the initial phase $S_0$.
\begin{hyp}\label{hypSfree} 
The initial phase $S_0\in C^\infty(\R^d;\R)$ is assumed to be sub-quadratic and
$$
\lim_{|y| \to \infty} \frac{|\nabla S_0(y)|}{|y|} = 0.
$$
\end{hyp}
In other words we need that $S_0$ grows strictly less than quadratically at infinity. This is the same assumption as in \cite{BGMP}, guaranteeing that the map $y\mapsto X(t,y)$ is proper and onto.

In the following we shall denote by $x\mapsto y\equiv Y(t,x)$ the inverse mapping of \eqref{freeX}. Clearly, for $|t|>T^*$ this inverse will not be unique in general,
i.e., 
for each fixed $(t,x)\in \R_t\times \R^d_x$ there is $N(t,x)\in \N$ and corresponding $Y_j(t,x)$, with $j=1, \dots, N(t,x)$, satisfying the implicit relation
\begin{equation}\label{freeY}
Y_j(t,x) + t \nabla S_0(Y_j(t,x)) = x.
\end{equation}
Assumption \ref{hypSfree} guarantees that in each connected component of $(\R_t\times \R^d) \setminus \mathscr C_{\rm free} $ there are only
finitely many $\{Y_j(t,x)\}$. (This 
follows by properness of the characteristic map and the implicit function theorem, see \cite[Lemma 1.1]{BGMP}.) In addition,
in each such connected component $N(t,x)= \text{const}$.
Moreover, under the same assumptions on $S_0$, we already know that the caustic onset time $T^*$ is positive,
and thus
there is exactly one connected component $\Omega_0$ of $ (\R_t\times \R^d) \setminus \mathscr C_{\rm free} $ containing $\{ t = 0 \}$.

In order to proceed further, we also recall that the solution of \eqref{schfree} admits an explicit representation in the form of an $\e$-oscillatory integral
\begin{equation}\label{explicit}
\psi^\e(t,x) = \left(\frac{1}{\sqrt{2\pi i \e t}}\right)^d \int_{\R^d} a_0(y) e^{i \Phi(t,x,y)/\e} \, dy,
\end{equation}
where the phase is given by
\begin{equation}\label{Phi}
\Phi(t,x,y) := S_0(y) + \frac{|x-y|^2}{2t}.
\end{equation}
It is well known that, for $\e\to 0_+$, the representation formula \eqref{explicit} can be treated by the stationary phase techniques (see, e.g., Theorem 7.7.6. of \cite{Ho}) and we consequently obtain the following lemma.

\begin{lemma} \label{lem:statphase}
Let $a_0\in \mathcal S(\R^d;\C)$ and $S_0 $ satisfy Assumption \ref{hypSfree}. Then, for all $(t,x) \in  (\R_t\times \R^d) \setminus \mathscr C_{\rm free} $ the solution of 
\eqref{schfree} satisfies
\begin{equation}\label{statpsi}
\psi^\e(t,x) \stackrel{\e \to 0_+}{= }\sum_{j=1}^{N(t,x)} a_j(t,x) e^{i \pi \kappa_j (t,x)/4} \, e^{i \Phi(t,x, Y_j(t,x))/\e} +r_\e(t,x),
\end{equation}
where $\Phi(t,x,y)$ is given by \eqref{Phi}, $\kappa_j(t,x)\in \N$ denotes the Maslov factor, and 
\begin{equation}\label{aj}
a_j(t,x) =  \frac{a_0(Y_j(t,x))}{|\det( {\rm Id} + t \nabla^2 S_0(Y_j(t,x)))|^{1/2}}.
\end{equation}
In addition, the remainder $r^\e$ satisfies 
\begin{equation}
\label{eq:rest1}
\|r_\e\|_{C^0(\Omega)} =\mathcal O(\e),\qquad \| r_\e\|_{C^1(\Omega)} =\mathcal O(1)\qquad \text{as $\e \to 0_+$,}
\end{equation}
uniformly on compact subsets $\Omega \subset (\R_t\times \R^d) \setminus \mathscr C_{\rm free} $.
\end{lemma}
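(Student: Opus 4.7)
The plan is to apply the classical stationary phase method directly to the oscillatory integral representation \eqref{explicit}. In the variable $y$, the phase $\Phi(t,x,y) = S_0(y) + |x-y|^2/(2t)$ is critical precisely where $\nabla S_0(y) = (x-y)/t$, i.e., where $y + t\nabla S_0(y) = x$; these are exactly the branches $Y_j(t,x)$ defined by \eqref{freeY}. The corresponding Hessian is
$$\nabla_y^2\Phi(t,x,Y_j) = \nabla^2 S_0(Y_j) + \tfrac{1}{t}{\rm Id} = \tfrac{1}{t}\bigl({\rm Id} + t\nabla^2 S_0(Y_j)\bigr),$$
which is invertible whenever $(t,x)\notin \mathscr C_{\rm free}$ by the very definition of the caustic set.

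Fix a compact set $\Omega \subset (\R_t\times \R^d)\setminus \mathscr C_{\rm free}$. By \cite[Lemma 1.1]{BGMP}, $N(t,x)$ is constant on $\Omega$, the branches $Y_j$ are smooth in $(t,x)$, and the Hessian is uniformly nondegenerate. Assumption \ref{hypSfree}, together with compactness of $\Omega$, ensures that $|\nabla_y \Phi(t,x,y)| \ge c > 0$ once $|y|$ exceeds some $R=R(\Omega)$; repeated integration by parts via the operator $\tfrac{\e}{i|\nabla_y\Phi|^2}\nabla_y\Phi\cdot\nabla_y$, combined with the Schwartz decay of $a_0$, then makes the tail $|y|\ge R$ contribute an $\mathcal O(\e^\infty)$ error uniformly in $(t,x)\in\Omega$. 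On the compact region $|y|\le R$ a smooth partition of unity isolates each critical point $Y_j$.

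Now I would apply Hörmander's stationary phase expansion (Theorem 7.7.6 of \cite{Ho}) on each localized piece. The leading contribution near $Y_j$ is
$$\Bigl(\tfrac{1}{\sqrt{2\pi i\e t}}\Bigr)^d\,a_0(Y_j)\,\frac{(2\pi\e)^{d/2}}{|\det \nabla_y^2 \Phi(t,x,Y_j)|^{1/2}}\,e^{i\pi\sigma_j/4}\,e^{i\Phi(t,x,Y_j)/\e},$$
where $\sigma_j$ is the signature of the Hessian. Substituting the explicit Hessian collapses the algebraic factors to $|\det({\rm Id}+t\nabla^2 S_0(Y_j))|^{-1/2}$, producing the amplitude $a_j$ of \eqref{aj}, and the remaining $d$-dependent phases $(i)^{-d/2}$ and $e^{i\pi\sigma_j/4}$ combine into a single Maslov exponential $e^{i\pi\kappa_j/4}$ with $\kappa_j\in\N$ (defined modulo $8$). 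Hörmander's expansion further produces a next term of the form $\e\,\tilde a_j(t,x)\,e^{i\Phi(t,x,Y_j)/\e}$ with $\tilde a_j$ smooth on $\Omega$, plus an $\mathcal O(\e^2)$ remainder uniformly on $\Omega$; this immediately yields $\|r_\e\|_{C^0(\Omega)} = \mathcal O(\e)$.

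The main obstacle I anticipate is the $C^1$ bound $\|r_\e\|_{C^1(\Omega)} = \mathcal O(1)$: differentiating $r_\e$ in $(t,x)$ can hit one of the oscillatory factors $e^{i\Phi_j/\e}$ and pull down a $1/\e$, naively degrading $\mathcal O(\e)$ to $\mathcal O(1)$ but losing the desired cancellation. To handle this cleanly I would differentiate \eqref{explicit} directly under the integral sign, which preserves the phase $\Phi(t,x,y)$ and only multiplies the amplitude $a_0(y)$ by smooth polynomial factors in $(y-x)/t$ arising from $\partial_{t,x}\Phi$, together with the algebraic derivatives of the prefactor. Re-applying the same stationary phase argument to the differentiated integral and subtracting the formal derivatives of the explicit leading sum in \eqref{statpsi} produces a residual that is again of the form ``next-order term in the Hörmander expansion'', hence $\mathcal O(1)$ uniformly on $\Omega$; this gives the claimed $C^1$ bound.
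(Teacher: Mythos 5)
Your proposal is correct and takes essentially the same route as the paper: the $C^0$ estimate is the standard uniform stationary-phase expansion (H\"ormander, Theorem~7.7.6), using Assumption~\ref{hypSfree} to control the tail $|y|\ge R$ by non-stationary-phase integration by parts and a partition of unity to isolate the finitely many critical points $Y_j$ on compacta away from $\mathscr C_{\rm free}$. For the $C^1$ bound the paper argues differently in form but not in substance: it uses that $\nabla$ commutes with the free propagator, so $\nabla\psi^\e$ is itself a free evolution with initial amplitude $\nabla\psi_0^\e = \bigl(\nabla a_0 + \tfrac{i}{\e}a_0\nabla S_0\bigr)e^{iS_0/\e}$, and then applies stationary phase to that oscillatory integral; the $1/\e$ term in the amplitude produces exactly the $1/\e$ term of $\nabla\bigl(\sum_j a_j e^{i\pi\kappa_j/4}e^{i\Phi_j/\e}\bigr)$ (since at the critical point $\tfrac{x-Y_j}{t}=\nabla S_0(Y_j)=\nabla_x\Phi_j$), leaving an $\mathcal O(1)$ discrepancy. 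Your version --- differentiating \eqref{explicit} under the integral sign, which brings down $\tfrac{i}{\e}\nabla_x\Phi = \tfrac{i}{\e}\tfrac{x-y}{t}$ --- is equivalent to the paper's after one integration by parts in $y$ (using $\tfrac{x-y}{t}=\nabla S_0(y)-\nabla_y\Phi$), so both hinge on the same cancellation. One small imprecision to fix: you write that differentiating only multiplies $a_0$ ``by smooth polynomial factors in $(y-x)/t$,'' but the crucial feature is the accompanying $\e^{-1}$ prefactor; the argument works precisely because this $\e^{-1}$ contribution, after stationary phase, reproduces the $\e^{-1}$ leading part of the derivative of the explicit multi-phase sum, and the subleading corrections from H\"ormander's expansion are uniformly $\mathcal O(1)$.
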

\begin{remark} The first remainder estimate $\|r_\e\|_{C^0(\Omega)} =\mathcal O(\e)$ is classical, whereas the second one 
can be obtained by noticing that the operator $\nabla$ commutes with the free Schr\"odinger equation \eqref{schfree}. 
Thus, we find that $\nabla \psi^\e$ satisfies an integral representation analogous to \eqref{explicit}, i.e.,
\[
\nabla \psi^\e(t,x) = \left(\frac{1}{\sqrt{2\pi i \e t}}\right)^d \int_{\R^d}  e^{i |x-y|^2/(2t \e)} \, \nabla \psi^\e_0(y) \, dy.
\]
By applying the stationary phase lemma to this oscillatory integral one readily infers the estimate $\| r_\e\|_{C^1(\Omega)} =\mathcal O(1)$.
\end{remark}

Next, we note that, in view of \eqref{Phi} and \eqref{freeY}, we explicitly have
\begin{equation}
\begin{split}\label{PhiJ}
 \Phi(t,x, Y_j(t,x)) \equiv& \ S_0(Y_j(t,x)) + \frac{1}{2t} | x - Y_j(t,x)|^2  \\= & \ S_0(Y_j(t,x)) + \frac{t}{2} |\nabla S_0(Y_j(t,x)|^2.
\end{split}
\end{equation}
On the other hand, since for $V(x)= 0$ it holds that $P(t,y) = \nabla S_0(y)$ (i.e., $P$
is constant along the characteristics),
the solution formula \eqref{phi} yields, for all $j=1, \dots, N$:
\begin{equation}
\begin{split}
S_j(t,x) & \ = S_0(Y_j(t,x)) + \int_0^t \frac{1}{2} |P(\tau, y)|^2  \  d \tau \big|_{y=Y_j(t,x)} \\
& \ = S_0(Y_j(t,x)) + \frac{t}{2} |\nabla S_0(Y_j(t,x)|^2.
\end{split}
\end{equation}
We consequently infer that $ \Phi(t,x, Y_j(t,x)) \equiv S_j(t,x)$ is a smooth solution of the free Hamilton-Jacobi equation \eqref{freehj} for all $j =1, \dots , N(t,x)$. 
Obviously, we also have that $a_j$ given by \eqref{aj} solves the corresponding transport equation \eqref{a} with $S\equiv S_j$. 

\begin{remark} An alternative way of showing that $ \Phi(t,x, Y_j(t,x))$ solves the free Hamilton-Jacobi equation 
is to plug \eqref{PhiJ} into \eqref{freehj} and use \eqref{freeY} to implicitly differentiate with respect to $t$ and $x$. A lengthy but straightforward computation then yields the desired result.
\end{remark}
For completeness we also recall that the Maslov factor is explicitly given by \cite{Ho}
\[
\N \ni \kappa_j(t,x) = m_j^+ (t,x) - m_j^-(t,x) ,
\]
where $m^\pm(t,x)\in \N$ denotes, respectively, the number of positive or negative eigenvalues of the matrix
${\rm Id} + t \nabla^2 S_0(Y_j(t,x))$. 
Note that $\kappa_j$ can also be written in the form 
\[
\kappa_j(t,x) =  d - 2 m_j^-(t ,x).
\]
By the implicit function theorem, $\kappa(t,x)= \text{const}$ in every connected component
of $(\R_t \times \R^d_x) \setminus \mathscr C_{\rm free}$, see, e.g., \cite{BGMP}.

\subsection{WKB analysis of Bohmian dynamics in the free case}\label{sec:freebohm} From what is said above, we infer that 
in each connected component $\Omega$ of $(\R_t \times \R^d_x) \setminus \mathscr C_{\rm free}$, the solution $\psi^\e$ 
admits the approximation \eqref{statpsi}, so
Theorem \ref{thm:after} can be applied after identifying
$$b_j (t,x)= a_j (t,x)e^{i \pi \kappa_j (t,x)/4}\equiv a_j (t,x)e^{i \pi \kappa_\Omega/4},
\qquad j = 1, \dots, N(t,x)\equiv N_\Omega,$$ 
where $\kappa_\Omega \in \R$ and $N_\Omega \in \N$ are constants depending only on $\Omega$.
Consequently, we obtain the following result.

\begin{theorem} \label{thm:free}
Let $a_0\in \mathcal S(\R^d;\C)$ and $S_0 $ satisfy Assumption \ref{hypSfree}. Denote by $\Omega_0 $ the 
connected component of $ (\R_t\times \R_x^d) \setminus \mathscr C_{\rm free} $ containing $\{ t = 0 \}$. Then it holds:

\emph{(i)} The limiting Bohmian measure satisfies
\[
\beta (t,x,p) = w(t,x,p) = \rho(t,x) \delta (p - u(t,x)), \quad \forall (t,x) \in \Omega_0 ,
\]
and the Bohmian trajectories converge
\[
X^\e (t,y)  \stackrel{\e\rightarrow 0_+ }{\longrightarrow}  y + t \nabla S_0(y), \quad P^\e (t,y)  \stackrel{\e\rightarrow 0_+ }{\longrightarrow}  \nabla S_0(y), 
\]
locally in measure on $\Omega_0 \cap \{ \R_t \times \supp \rho_0\}$.

\emph{(ii)} Outside of $\Omega_0$ there are regions $\Omega\subseteq (\R_t\times \R_x^d) \setminus \mathscr C_{\rm free} $ where $\beta \not = w$ and where the 
Bohmian momentum $P^\e$ does not converge locally in-measure to the classical momentum $P$.

\emph{(iii)} There exist initial data $a_0(y)$ and $S_0(y)$ such that, outside of $\Omega_0$, there are regions $\tilde \Omega\subseteq (\R_t\times \R_x^d) \setminus \mathscr C_{\rm free} $ in which 
both $X^\e$ and $P^\e=\dot X^\e$ do not converge to the classical flow. \end{theorem}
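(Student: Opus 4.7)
The plan is to combine the stationary phase expansion of Lemma \ref{lem:statphase} with the multi-phase formula of Theorem \ref{thm:after}, distinguishing the pre-caustic component $\Omega_0$ (where $N=1$) from the other connected components of $(\R_t\times\R^d_x)\setminus\mathscr C_{\rm free}$ (where $N\ge 2$). For \textbf{(i)}: since $\{t=0\}\subset\Omega_0$ and $N(0,\cdot)\equiv 1$, constancy of $N(t,x)$ on connected components gives $N_{\Omega_0}=1$, so Lemma \ref{lem:statphase} furnishes a single-phase WKB expansion $\psi^\e = a(t,x)e^{i\pi\kappa/4}e^{iS(t,x)/\e}+r_\e$ with $r_\e$ satisfying \eqref{eq:rest} on compact subsets of $\Omega_0$. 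The $N=1$ case of Theorem \ref{thm:after} then yields $\beta = |a|^2\delta(p-\nabla S)$, which coincides with the Wigner measure \eqref{wigdelta}. For the trajectory convergence I would reproduce the argument of Theorem \ref{th1}(ii) on $\Omega_0$: the mono-kinetic form of $\beta$ and formula \eqref{formula} force $\Upsilon_{t,y}=\upsilon_{t,y}(x)\delta(p-\nabla S(t,x))$ for $\rho_0$-a.e.~$y$; Proposition \ref{proptrans} then identifies $\upsilon_{t,y}$ with $\delta(x-y-t\nabla S_0(y))$ by uniqueness for the free transport equation, and the resulting Dirac concentration of $\Upsilon_{t,y}$ gives local in-measure convergence via \cite[Proposition 1]{Hu}.

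For \textbf{(ii)}: I pick any connected component $\Omega$ with $N_\Omega\ge 2$, which exists as soon as $\mathscr C_{\rm free}\neq\emptyset$ (since crossing the caustic produces additional classical branches). Distinct preimages $Y_j(t,x)$ yield distinct momenta $\nabla S_j=\nabla S_0(Y_j)$, so the non-degeneracy \eqref{eq:diff gradient} holds automatically; choosing $a_0$ non-vanishing at the $Y_j$ ensures all amplitudes $b_j = a_j e^{i\pi\kappa_j/4}$ in \eqref{statpsi} are nonzero, whence Corollary \ref{cor:wig} gives $\beta\neq w$ on $\Omega$. To rule out $P^\e\to P$ locally in measure I argue by contradiction: combined with $\dot X^\e=P^\e$ and Lemma \ref{lem:apriori}, such convergence would force $X^\e\to X(t,y)=y+t\nabla S_0(y)$ in measure, hence $\Upsilon_{t,y}=\delta(x-X(t,y))\delta(p-\nabla S_0(y))$; inserting into \eqref{formula} and changing variables $y\mapsto Y_j(t,x)$ gives
\[
\beta(t,x,p)=\sum_{j=1}^{N_\Omega}\frac{\rho_0(Y_j(t,x))}{|\det(\mathrm{Id}+t\nabla^2S_0(Y_j(t,x)))|}\,\delta(p-\nabla S_0(Y_j(t,x))),
\]
which by \eqref{aj} is exactly the Wigner measure \eqref{wigdelta}, contradicting $\beta\neq w$.

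For \textbf{(iii)}: I would exhibit explicit one-dimensional data, with $a_0\in\mathcal S(\R)$ strictly positive and $S_0$ smooth, sub-quadratic, producing an open three-phase region $\tilde\Omega$ (e.g.~$\nabla S_0(y)=-\tanh(5y-5/2)$, cf.~Fig.~\ref{caustic}). Theorem \ref{thm:after} describes $\beta$ on $\tilde\Omega$ as the push-forward of the measure $\Gamma(t,x,\theta)\,d\theta$ on $\T^3$ under the map $\theta\mapsto p(t,x,\theta):=\sum_{j,k}\nabla S_j\,\Gamma_{j,k}/\Gamma$; I would verify that, since the three classical momenta $\nabla S_j(t,x)$ are pairwise distinct and all $b_j(t,x)\neq 0$, this map is non-constant on an open subset of $\tilde\Omega$, so the conditional distribution $\beta(t,x,\cdot)$ is non-atomic in $p$ on that set. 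On the other hand, if both $X^\e\to X^\infty$ and $P^\e\to P^\infty$ held locally in measure, then \eqref{formula} would express $\beta$ as the push-forward of the one-dimensional measure $\rho_0(y)\,dy$ under $y\mapsto(X^\infty(t,y),P^\infty(t,y))$, so $\beta(t,x,\cdot)$ would be a countable sum of atoms for a.e.~$x$, contradicting non-atomicness. The main obstacle is precisely this last verification: showing that the $\theta$-map $p(t,x,\cdot):\T^3\to\R$ is genuinely non-constant on a positive-measure set of $(t,x)\in\tilde\Omega$, which reduces to a full-rank argument for $\partial_\theta p$ relying on both \eqref{eq:diff gradient} and $b_j\neq 0$, rather than on some accidental cancellation in the rational-in-$e^{i\theta}$ formula for $p$.
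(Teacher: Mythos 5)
Your parts (i) and (ii) follow the paper fairly closely. For (i) you reproduce the paper's argument: $N\equiv 1$ and $\kappa\equiv 0$ on $\Omega_0$, apply Theorem~\ref{thm:after} with $N=1$, and carry over the proof of Theorem~\ref{th1}(ii) verbatim. For (ii) you also reach the same conclusion, though by a slightly different route: the paper argues directly that $\beta$ diffuse in $p$ forces $\Upsilon_{t,y}$ to be diffuse in $p$, which is incompatible with in-measure convergence of $P^\e$ (the Young-measure characterization), whereas you run a contradiction through the hypothetical convergence $P^\e\to P$, recover the $N$-branch formula for $\beta$, and invoke Corollary~\ref{cor:wig}. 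Both are acceptable, although your step ``$P^\e\to P$ on $\Omega$ forces $X^\e\to X$'' needs the integral $X^\e(t,y)=y+\int_0^t P^\e(\tau,y)\,d\tau$ to be controlled through times lying outside $\Omega$ --- justifiable via Proposition~\ref{proptrans} and the measure-zero nature of the caustic, but it is a point worth spelling out that the paper's direct argument bypasses.

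Part (iii) has a genuine gap that goes beyond the ``main obstacle'' you flag. Your contradiction hypothesis is that \emph{both} $X^\e\to X^\infty$ and $P^\e\to P^\infty$ hold locally in measure. But (ii) already shows that $P^\e$ does not converge in measure (to anything: diffuseness of the $p$-marginal of $\Upsilon_{t,y}$ rules out any Dirac concentration). Hence the hypothesis of your contradiction is never met, and the argument is vacuous --- it does not show anything new about $X^\e$. This matters because $X^\e\to X$ in measure is entirely compatible with $P^\e$ not converging: if the $x$-marginal of $\Upsilon_{t,y}$ were $\delta(x-X(t,y))$, Proposition~\ref{proptrans} constrains the $p$-marginal $\nu_{t,y}$ only through its \emph{mean}, $\int p\,\nu_{t,y}(dp)=\dot X(t,y)$, leaving $\nu_{t,y}$ free to be diffuse. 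So diffuseness of $\beta$ simply cannot be played against concentration in $x$ alone. (Your acknowledged obstacle --- genuine non-constancy of $\theta\mapsto p(t,x,\theta)$ --- is in fact the easy part: it already follows from $N\ge 2$, $\nabla S_j$ pairwise distinct, and all $b_j\ne 0$, and is precisely what Theorem~\ref{thm:after} encodes.)

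The paper's proof of (iii) is a different kind of argument altogether and does not pass through Young measures: for the explicit data \eqref{numfreeini} one observes that $\rho(t,x)>0$ everywhere, hence $\rho^\e\ge c/2$ on compact $(t,x)$-rectangles for $\e$ small (using the $C^0$ remainder bound \eqref{eq:rest1}), so $u^\e=J^\e/\rho^\e$ is a \emph{smooth} vector field there. By Cauchy--Lipschitz, Bohmian trajectories cannot cross inside such a rectangle, and by the symmetry of the initial data $X^\e(t,1/2)\equiv 1/2$, so all Bohmian trajectories are trapped on one side of $x=1/2$. The classical trajectories $X(t,y)=y+t\nabla S_0(y)$ do cross $x=1/2$ past the caustic (Fig.~\ref{caustic}), so $X^\e\not\to X$ in measure. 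This topological non-crossing obstruction is what actually forces $X^\e$ to deviate from $X$, and it is the essential missing idea in your attempt.
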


Note that Assertion (i) is slightly stronger than Theorem \ref{th1} (i) in the sense that $\Omega_0$ is strictly {\it larger} than $[0,T^*) \times \R^d_x$. The proof shows that if $|a_0|>0$ on all of $ \R^d$, 
Assertion (ii) holds for any connected component $\Omega \not = \Omega_0$, whose boundary intersects the boundary of $\Omega_0$.

\begin{proof} 
We first note that for all $(t,x)\in \Omega_0$ it holds $N(t,x) = 1$ and $\kappa_j (t,x) =0$. 
In view of the remainder estimates stated in Lemma \ref{lem:statphase} we thus can apply Theorem \ref{thm:after} with $N=1$ to obtain
$$\beta (t,x,p)= \rho(t,x) \delta (p -\nabla S(t,x)),$$
where $\rho =  |a|^2 $. 
With this in mind, the result on the convergence of the Bohmian trajectoriess follows verbatim from
the proof of Theorem \ref{th1} (ii).  This proves the first assertion.

In order to prove Assertion (ii), we first note that that outside of $\Omega_0$ we have (in general) more
than one branch, i.e., $N(t,x) > 1$. For instance, assume that $|a_0|>0$ on $\R^d$,
and let $\Omega\neq \Omega_0$ be a connected component whose boundary
intersects the boundary of $\Omega_0$. Then it is not difficult to see that $N_\Omega \neq 1$,
as otherwise one could show that no caustics can occur on $\partial \Omega_0 \cap \partial \Omega$.
Next, we recall that in each connected component $ \Omega$ of $ (\R_t\times \R^d) \setminus \mathscr C_{\rm free} $ 
the phase $ \Phi(t,x, Y_j(t,x)) \equiv S_j(t,x)$ is a smooth solution of the Hamilton-Jacobi equation \eqref{hj}.
By the method of characteristics we have that
$$ \nabla \Phi(t,x,Y_j(t,x)) \equiv \nabla S_j(t,x) = P(t,Y_j(t,x)) = \nabla S_0(Y_j(t,x)),$$
since $P(t,y)$ is constant along characteristics (recall that $V(x)=0$). 
Hence, assuming by contradiction that $\nabla S_j = \nabla S_k$ for some $j\not = k$ , the above identity
together with \eqref{freeX} yields $Y_j (t,x)= Y_k(t,x)$, which is impossible by construction. 
This implies that in each connected component $\Omega$ we can apply Theorem \ref{thm:after} to conclude that
$\beta$ in general 
is a diffuse measure in $p\in \R^d$, unless all but one of the $a_j=0$ in $\Omega$. In view of \eqref{aj},
the latter cannot be the case if $|a_0|>0$ on $\R^d$.
Corollary \ref{cor:wig} then immediately implies $\beta \not = w$. 
On the other hand, since for WKB initial data 
we have that $\rho^\e_0$ is indeed $\e$-independent, we can apply \eqref{formula} in $\Omega$ to infer that the Young measure $\Upsilon_{t,y}$ is diffusive in $p$ (since $\beta$ is). 
This, however, prohibits the convergence of $P^\e$ locally in measure, since the latter is equivalent to $\Upsilon_{t,y}$ being concentrated in a single point.

The result in (ii) may still give some hopes for the convergence of $X^\e$ to $X$, since the fact that
$\dot X^\e=P^\e$ gives more compactness for the curves in the $x$-variables. However,
we shall see that this is not the case.

Consider indeed the example described in Fig. \ref{caustic} and Fig. \ref{Xcau1e3} (so $d=1$).
These figures suggest that for $\psi^\e_0$ as in \eqref{numfreeini} convergence should not hold.
To show this rigorously, we begin by observing that
$\rho(t,x)>0$ on $\R_t \times \R_x$ (this follows from the explicit formula
for $\rho=|a|^2$, but it can also be seen from Fig. \ref{caustic} observing there
only the trajectories starting inside $[0,1]$ are plotted).
Since $\rho$ is smooth, this implies that for $R,T>0$ there exists a positive constant $c_{R,T}$ such that
$$
\rho(t,x) \ge c_{R,T}\quad \text{for }(t,x)\in [0,T] \times [-R,R].
$$
In particular, since $\psi^\e$ is given by \eqref{statpsi} with $r_\e$ small in $C^0$, see \eqref{eq:rest1},
it follows that
\begin{equation}
\label{eq:bound rho e}
\rho^\e(t,x) \ge \frac{c_{R,T}}{2}\quad \text{for }(t,x)\in [0,T] \times [-R,R]
\end{equation}
for all $\e>0$ sufficiently small (the smallness depending on $T$ and $R$).
Recalling that
$$
\dot X^\e=u^\e(t,X^\e(t,x)),\quad u^\e=\frac{J^\e}{\rho^\e},
$$
and that $J^\e$ and $\rho^\e$ are both smooth, it follows from \eqref{eq:bound rho e}
that $u^\e$ is smooth as well inside $[0,T] \times [-R,R]$.
In particular, by the Cauchy-Lipschitz theorem, the Bohmian trajectories $X^\e$
can never cross inside $[0,T] \times [-R,R]$.
Since by symmetry
$X^\e(t,1/2)=1/2$ for all $t \ge 0$,
this implies in particular that, for all $t \in [0,T]$:
$$
X^\e(t,x) \ge 1/2 \quad \forall \,x \ge 1/2,\qquad X^\e(t,x) \le 1/2 \quad \forall \,x \le 1/2.
$$
Letting $\e \to 0$ we deduce that $X^\e \not \rightarrow X$ (locally) in measure on $\tilde \Omega \equiv [0,T] \times [-R,R]$, since otherwise
the above property would give
$$
X(t,x) \ge 1/2 \quad \forall \,x \ge 1/2,\qquad X(t,x) \le 1/2 \quad \forall \,x \le 1/2
$$
for all $t \ge 0$, which is not the case (see Fig. \ref{caustic}).
This proves Assertion (iii).
\end{proof}

\begin{remark} Note that for $|t|>T^*$, i.e., after caustic onset, the Wigner measure is given by \eqref{wigdelta} for all $(t,x)\in (\R_t \times \R^d_x) \setminus \mathscr C_{\rm free}$. 
In particular, this shows that $w$ is {\it insensitive} to the Maslov phase shifts, since $|a_j|^2=|b_j|^2$ for all $j=1, \dots, N(t,x)$. 
The limiting Bohmian measure $\beta$, however, {\it incorporates} these phase shifts in view of the formula given in Theorem \ref{thm:after}. 
However, as we have seen in Section \ref{sec:wig} these phase shift do not enter in the classical limit of $\rho^\e$ and $J^\e$.
\end{remark}

\subsection{Extension to the non-zero potential case}\label{sec:FIO}

In the case where $V(x)\not = 0$ the situation becomes considerably more complicated, due to a lack of an explicit integral representation for the exact solution $\psi^\e$ of \eqref{sch}. The only exception 
therefrom is the case $V(x) = \pm \frac{1}{2} |x|^2$ where one has {\it Mehler's formula} replacing \eqref{explicit}, see, e.g., \cite{Ca1}. In order to proceed further in situations where 
$V$ is a more general (sub-quadratic) potential, 
one needs to approximate the full Schr\"odinger propagator $$U^\e(t) = e^{-i H^\e t}, \quad \text{with}\ H^\e = -\frac{\e^2}{2} \Delta + V(x),$$ 
for $0<\e\ll 1$ by a semi-classical {\it Fourier integral operator} \cite{Du}. 
Early results on this can be found in \cite{Ch, Fu}, where the occurrence of caustics makes the approximation valid only locally in-time. 
This problem can be overcome, by considering a class of Fourier integral operators whose Schwartz kernel furnishes 
an $\e$-oscillatory integral with {\it complex} phase and quadratic 
imaginary part, see \cite[Theorem 2.1]{LS} for a precise definition. 
Using this, the authors of \cite{LS} construct a global in-time approximation of $U^\e(t)$ for potentials satisfying $V\in C^\infty_{\rm b}(\R^d)$,
i.e.,
smooth and bounded together with all derivatives (see also \cite{GL,KS} for closely related results with slightly different assumptions). 
By applying the stationary phase lemma to this type of (global) Fourier integral operator, one infers the following result, as a slight generalization of \cite[Theorem 5.1]{LS}:

Fix a point $(t_0,x_0)\in (\R_t \times \R^d_x) \setminus \mathscr C$, i.e., 
away from caustics, and as before denote by $Y_j(t,x)$ and $j=1, \dots, N=N(t,x)\in \N$,
the solutions of the equation $x= X(t,y)$, where $t \mapsto X(t,y)$ is 
the classical flow map induced by \eqref{classflow}. Let $\{ y \in \R^d  :  |a_0(y)| >0 \},$
be a sufficiently small neighborhood of 
\begin{equation}\label{ypsilons}
\{ Y_1(t_0, x_0), \dots, Y_N(t_0, x_0)\} \subset \R^d,
\end{equation}
i.e., the points obtained by tracing back the classical trajectories intersecting in
$(t_0,x_0)\in (\R_t \times \R^d_x) \setminus \mathscr C$.
Then the solution of \eqref{sch} at $t=t_0$ admits the following approximative behavior:
\begin{align}\label{multiWKB}
\psi^\e(t_0, x)  \stackrel{\e \to 0_+}{= }\sum_{j=1}^{N(t,x)} a_j(t_0,x) e^{i \pi (m^+_j(t_0,x) - m^-_j(t_0,x) )/4 } \, e^{i S_j(t_0,x)/\e} +r_\e(t_0,x),
\end{align}
where the amplitudes $a_j$ and the (real-valued) phases $S_j$ are, respectively, given by \eqref{a} and \eqref{phi} with $Y$ replaced by $Y_j(t_0,x)$, and 
$m^+_j(t_0,x)$ (resp. $m^-_j(t_0,x)$) is the number of positive (resp. negative) eigenvalues of the matrix $\nabla_y X_t(Y_j(t_0,x))$.
In addition, the remainder $r_\e$ satisfies
\[
\| r_\e(t_0,\cdot) \|_{L^2(\Lambda)} = \mathcal O(\e),
\]
where $x \in \Lambda \subset \R^d$ is a {\it sufficiently small} neighborhood of $x_0 \in \R^d$. 
The above result (the proof of which can be found in \cite{BGMP}) replaces Lemma \ref{lem:statphase}, valid in the free case. Note however, that one only infers 
a local result in some sufficiently small neighborhood of $x_0 \in \R^d$, provided the initial amplitude $a_0$ is sufficiently concentrated on \eqref{ypsilons}.
In order to obtain an estimate for $\eps \nabla r_\eps$, we note that by applying the Hamiltonian $H^\e$ to \eqref{sch}, and having in mind that $V\in L^\infty(\R^d)$, we infer 
\[\sup_{0<\e \le 1}\| \eps^2 \Delta \psi^\eps(t,\cdot) \|_{L^2}\le C, \quad \forall t \in \R_+,\]
where $C>0$ is independent of $\e$. In view of \eqref{multiWKB}, we consequently obtain that $ \| \eps^2 \Delta r_\eps \|_{L^2}$ is uniformly bounded w.r.t. $\e$ and hence 
we can interpolate
\[
\| \eps \nabla r_\eps\|^2_{L^2}  \le C \, \|r_\eps\|_{L^2}\,  \|\eps^2 \Delta r_\eps\|_{L^2} =\mathcal O(\eps),
\]
to obtain $\|\eps \nabla r_\eps\|_{L^2}=\mathcal O(\sqrt{\eps})=o(1)$, as required in Theorem \ref{thm:after}. In order to apply 
the latter we also require $\nabla S_j \not = \nabla S_k$ for $j\not = k \in \{ 1, \dots, N\}$. This follows, from similar arguments as has been done in the free case. 
Indeed, if the gradients were the same, by following backward the Hamiltonian flow we would get that the curves were starting from the same point, which is a contradiction.

Thus, after using appropriate localization arguments,
the multi-phase form \eqref{multiWKB} combined with Theorem \ref{thm:after} allows to infer the
{\it same} qualitative picture for the classical limit of Bohmian dynamics 
in the case $V\not =0$, as we showed above for the free case. Using the same notation as above, we can summarize our discussion as follows.
\begin{proposition}
Let $V \in C^\infty_{\rm b}(\R^d)$ and $S_0 $ satisfy Assumption \ref{hypSfree}. Let
 $(t_0,x_0)\in (\R_t \times \R^d_x) \setminus \mathscr C$,
and assume that $\{ y \in \R^d  :  |a_0(y)| >0 \}$
is a sufficiently small neighborhood of
$\{ Y_1(t_0, x_0), \dots, Y_N(t_0, x_0)\}$.
Then there exists a small neighborhood $\mathcal U\subset \R_t \times \R_x^d$ of $(t_0,x_0)$ such that
$\beta\neq w$ inside $\mathcal U \times \R^d_p$. In particular, the Bohmian trajectories $(X^\e,P^\e)$
do not converge locally in measure to the classical Hamiltonian flow.
\end{proposition}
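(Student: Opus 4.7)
The plan is to transport the argument of Theorem \ref{thm:free}(ii) to the general sub-quadratic setting, using the Laptev--Sigal Fourier-integral-operator approximation \eqref{multiWKB} as a substitute for the explicit stationary-phase formula \eqref{statpsi} available in the free case. First I would fix $(t_0,x_0)\notin\mathscr C$ and, using the smoothness of the classical flow away from caustics together with the hypothesis that $\supp\,a_0$ is a small neighborhood of $\{Y_1(t_0,x_0),\dots,Y_N(t_0,x_0)\}$, extend \eqref{multiWKB} to an open space-time neighborhood $\mathcal U$ of $(t_0,x_0)$ on which the number of branches $N(t,x)\equiv N$, the Maslov indices $m_j^\pm$, and the preimages $Y_j(t,x)$ depend smoothly on $(t,x)$. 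On such a $\mathcal U$ each amplitude $a_j$ given by \eqref{a} is smooth, and by \eqref{aj} (in its generalized form) none of the $a_j$ vanishes identically, since $a_0$ does not vanish on the preimage points.

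Next I would check the two hypotheses of Theorem \ref{thm:after} on $\mathcal U$. The distinct-gradient condition \eqref{eq:diff gradient} follows from the observation that $\nabla S_j(t,x)=P(t,Y_j(t,x))$ by the method of characteristics, combined with the reversibility of the Hamiltonian flow \eqref{classflow}: if $\nabla S_j(t,x)=\nabla S_k(t,x)$, backward integration of \eqref{classflow} starting from $(x,\nabla S_j(t,x))$ produces the same initial point, forcing $Y_j(t,x)=Y_k(t,x)$, a contradiction. The remainder estimates \eqref{eq:rest} follow from $\|r_\e\|_{L^2(\mathcal U)}=\mathcal O(\e)$ (from \cite{LS}) together with the interpolation argument sketched just before the statement: applying $H^\e$ to the Schrödinger equation and using $V\in L^\infty$ and mass/energy conservation yields $\sup_\e\|\e^2\Delta\psi^\e(t)\|_{L^2}<\infty$ uniformly on compact time-intervals, which combined with the explicit form of the leading multi-phase part gives $\|\e^2\Delta r_\e\|_{L^2(\mathcal U)}=\mathcal O(1)$, and hence $\|\e\nabla r_\e\|_{L^2(\mathcal U)}^2\le C\|r_\e\|_{L^2(\mathcal U)}\|\e^2\Delta r_\e\|_{L^2(\mathcal U)}=\mathcal O(\e)$, which is $o(1)$.

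Once Theorem \ref{thm:after} applies, the limiting Bohmian measure on $\mathcal U\times\R^d_p$ takes the integrated form involving $\Gamma$ and $\Gamma_{j,k}$, while the Wigner measure of the same multi-phase state is the finite sum \eqref{wigdelta} (see \cite{LiPa,SMM}). Since $(t_0,x_0)\notin\mathscr C$ lies outside the initial caustic-free component (otherwise the statement would be vacuous, as the proposition is of interest precisely when $N\ge 2$), Corollary \ref{cor:wig} gives $\beta\neq w$ throughout $\mathcal U\times\R^d_p$: the diffuseness of $\beta$ in the $p$-variable is incompatible with the mono-kinetic sum structure of $w$. Finally, to rule out local-in-measure convergence of $(X^\e,P^\e)$ I use that for WKB initial data $\rho_0^\e=|a_0|^2$ is $\e$-independent, so identity \eqref{formula} applies and the Young measure $\Upsilon_{t,y}$ inherits diffuseness in $p$ from $\beta$ on a set of $(t,y)$ of positive $\rho_0\,dy\,dt$-measure; by the characterization \cite[Proposition 1]{Hu} of convergence in measure as concentration of the Young measure at a single point, this forbids $(X^\e,P^\e)\to(X,P)$ locally in measure on $\mathcal U\cap\{\R_t\times\supp\rho_0\}$.

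The main obstacle is purely technical and lies in Step~1--2: guaranteeing that the Laptev--Sigal approximation, which is naturally stated at a single time and on small spatial neighborhoods, can be promoted to a representation on an open space-time $\mathcal U$ with the remainder $r_\e$ satisfying both $L^2$-remainder estimates required by Theorem \ref{thm:after} uniformly in $(t,x)\in\mathcal U$. The verification of $\|\e\nabla r_\e\|_{L^2}=o(1)$ via interpolation is the most delicate point, since it ultimately requires propagation of $H^2$-type semiclassical regularity for $\psi^\e$ through the Schrödinger evolution under only $V\in C^\infty_{\rm b}$, and compatibility of this regularity with the Fourier-integral-operator decomposition; everything downstream (distinct gradients, application of Theorem \ref{thm:after}, diffuseness of $\Upsilon_{t,y}$, failure of in-measure convergence) then proceeds in close analogy with the free-case proof of Theorem \ref{thm:free}(ii).
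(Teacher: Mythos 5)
Your proposal follows exactly the paper's argument: the Laptev--Sigal FIO approximation \eqref{multiWKB} in place of the free-case stationary-phase lemma, the interpolation bound $\|\e\nabla r_\e\|_{L^2}^2\le C\|r_\e\|_{L^2}\|\e^2\Delta r_\e\|_{L^2}=\mathcal O(\e)$ giving \eqref{eq:rest}, the backward-flow argument for distinct gradients, and then Theorem \ref{thm:after} together with Corollary \ref{cor:wig} and the Young-measure diffuseness argument from Theorem \ref{thm:free}(ii). The technical obstacle you flag (promoting the Laptev--Sigal representation to an open space-time neighborhood) is precisely what the paper defers to ``appropriate localization arguments,'' so the two presentations match essentially step for step.
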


\section{Numerical simulation of Bohmian trajectories}\label{sec:num}

In this section we shall numerically study the behavior of Bohmian trajectories, mainly in the regime $0< \e \ll 1$ and in particular in situations where caustics appear.
Let us remark that the numerical implementation of Bohmian trajectories is used in applications of quantum chemistry, cf. \cite{DDP, NeFre}.

\subsection{Description of the numerical method}
For the numerical tracking of Bohmian trajectories $(X^\e, P^\e)$ it is necessary to 
solve the system (\ref{bohmflow}) for a given solution \( \psi^{\e} (t,x)\) of the 
Schr\"odinger equation (\ref{sch}). To this end, we will always consider initial data $\psi_0^\e \in \mathcal S(\R^d)$,
i.e., 
rapidly decreasing functions. This allows to numerically approximate the solution 
\( \psi^{\e} \) through a truncated Fourier series in the spatial 
coordinates by choosing the 
computational domain $\Omega_{\rm com}$ sufficiently large, i.e., such that \( |\psi^{\e}| \) is 
smaller than machine precision at the $\partial \Omega_{\rm com}$ (we use double precision which is 
roughly equivalent to \( 10^{-16} \)). Thus the function can be 
periodically continued as a smooth function with maximal 
numerical precision. In our numerical examples, we shall concentrate on the case of $d=1$ spatial dimension. The \( x \)-dependence of
 \( \psi^{\e} \) is consequently treated with a discrete Fourier transformation 
realized via a Fast Fourier Transform (FFT) in Matlab. We thereby always 
choose the resolution large enough so that the modulus of the Fourier 
coefficients decreases to machine precision which is achieved in the 
studied examples for \( 2^{10} \) to \( 
2^{14} \) Fourier modes. This resolution enables high precision interpolation 
from \( x \) to \( X^\e \) (see below). 

For the time-integration of the Schr\"odinger equation we shall rely on a {\it time-splitting method}. 
The  basic idea underlying these splitting methods is the Trotter-Kato formula
\cite{TK}, i.e.,
\begin{equation}\label{e11}
 \lim_{n\rightarrow\infty}\left(e^{-tA/n}e^{-tB/n}\right)^{n}=e^{-t\left(A+B\right)}
\end{equation}
where $A$ and $B$ are certain unbounded linear operators, for details 
see \cite{Ka}.
In particular this includes the cases studied by Bagrinovskii and
Godunov in \cite{BG} and by Strang \cite{ST}.
The formula \eqref{e11} allows to solve an evolutionary 
equation 
$$\partial_{t} u=\left(A+B\right)u, \quad u|_{t=0} = u_0,$$ in the following form
\[
u(t)=e^{c_{1}\Delta tA}e^{d_{1}\Delta tB}e^{c_{2}\Delta tA}e^{d_{2}\Delta tB}\cdots e^{c_{k}\Delta tA} 
e^{d_{k}\Delta tB}u_0,
\]
where $(c_{1},\ldots,c_{k})$ and $(d_{1},\ldots,d_{k})$
are sets of real numbers that represent fractional time steps.  
In the numerical treatment of \eqref{sch} we shall use a second order Strang splitting, i.e., 
$c_{i}=d_{i}=1$ for all $i$ except for $c_{1}=d_{k}=1/2$. The Schr\"odinger equation is consequently split into the following system:
\[i\e 
\partial_{t}u+\frac{\e^{2}}{2}\partial_{xx}u=0,\qquad  i\e 
\partial_{t}u=V(x)u. \] The first equation can then be explicitly integrated in 
Fourier space, using two FFT's. The second equation can explicitly be solved (in physical space) in the form $$u(t,x) = e^{- it V(x)/\e} u_0.$$

Next, in order to solve the Bohmian equations of motion (\ref{bohmflow}) 
for a given \( \psi^{\e}(t,x) \), we need to interpolate between the coordinate \( 
x \), in which \( \psi^{\e} \) is given, and the coordinate \( 
X^\e \). For this we use that the \( x \)-dependence of $\psi^\e$ is treated by  
Fourier spectral methods. Thus we can apply the  
representation of \( \psi^{\e} \) in terms of truncated Fourier 
series not only at the collocation points for which  the formulae for 
the discrete Fourier transform hold, but at {\it general intermediate}
points. The main drawback is that for such points there is no FFT 
algorithm known 
and the transformation is thus computationally more expensive. But since we only need to track a limited 
number of trajectories \( X^\e \) and since this interpolation 
method is of high accuracy, our approach is more efficient than, say, a low 
order polynomial (\textit{spline}) interpolation (as used, e.g., in \cite{DDP}). In order to obtain the Bohmian momentum $P^\e$ 
we interpolate, $x \leftrightarrow X^\e$ within \( \psi^{\e}(t,x) \) 
and \( \partial_{x}\psi^{\e}(t,x) \), for fixed time \( t \in \R \). To this end, we note that the latter is of course determined in Fourier space. We 
consequently compute $P^\e$ through $$ P^\e(t, X^\e) = 
\e \im\left(\frac{\partial_x \psi^\e(t, X^\e) }{\psi^\e(t, X^\e)}\right).$$ We test the  
accuracy of the interpolation by comparing different numbers of 
Fourier modes for the solution of the Schr\"odinger equation for a 
given set of computed trajectories. Once machine precision is assured for 
\( \psi^{\e} \) (i.e., the modulus of the Fourier coefficients 
decreases below \( 10^{-12} \), in our case), the difference between 
different interpolates can be shown to be of the same order. Thus we 
can conclude that the spatial resolution of the trajectories is of the 
order of \( 10^{-12} \), much better than plotting accuracy.

The time integration of the first equation of the system (\ref{bohmflow}) is 
performed with an explicit scheme (here, we shall use a standard fourth order Runge-Kutta method). 
This allows to compute the right-hand side of this equation with the already known 
values for \( X^\e \) at the previous time step. Note that we compute 
the solution to the Schr\"odinger equation either exactly in time 
(if \( V(x)=0 \)) or with second order time splitting for each stage of 
the Runge-Kutta scheme (whenever \( V(x)\neq 0 \)). We shall test the accuracy of 
the time integration scheme by assuring that the difference of the 
numerical solution for \( N_{t} \) time steps to the solution 
for \( 2N_{t} \) time steps is smaller that \( 10^{-4} \) and thus 
much smaller than plotting accuracy. Typically we use \( N_{t}=10^{4} 
\). In addition the accuracy of the splitting scheme is tested as in \cite{etna} 
by tracing the 
numerically computed energy $E_{\rm num}^\e(t)$ which due to 
unavoidable numerical errors is indeed a function of time. In our examples, the relative conservation 
of $E_{\rm num}^\e(t)$ is ensured to better than \( 10^{-7} \) implying again 
an accuracy of more than \( 10^{-5} \).

\begin{remark}
    For efficiency reasons, the computation of the trajectories \( X^\e \) is done at the same time for all \( X^\e \) . Thus, 
    in principle, it could happen that the identification of the trajectories in the 
    examples below do not reflect the actual dynamics. By tracing also individual trajectories, i.e., 
    by computing just one \( X^\e \) per run, we nevertheless are able to ensure that this is not the case and that 
    the shown 
    trajectories are indeed the correct ones. In particular our numerical code 
    captures the physically imporant property that Bohmian trajectories {\it do 
    not cross}, see, e.g., \cite{DDP} (see also the proof of Theorem \ref{thm:free} (iii)). This is indeed a delicate issue in other numerical approaches 
    where the system (\ref{classflow}) is numerically integrated 
    with (\ref{hj}) and (\ref{wkbamp}) instead of (\ref{sch}), and 
    where  
    different interpolation techniques are used. The latter have to be chosen in a 
    way to avoid the crossing of the trajectories (see Section \ref{sec:vortices} below). 
\end{remark}

\subsection{Case studies} In the following we shall illustrate our analytical results by numerical examples, starting with the (globally smooth) case of semiclassical wave packets, which has 
already been treated in an earlier paper \cite{MPS2}. We shall then also consider the case of $\psi_0^\e$ exhibiting vortices before we finally deal with WKB initial data producing caustics.

\subsubsection{Vortices}\label{sec:vortices} 
Before studying the regime $0< \e \ll 1$ we shall show that our numerics displays an important non-crossing property of Bohmian trajectories. 
Indeed, it is well known that solutions to the Schr\"odinger equation \eqref{sch} in general can have nodes, i.e., points 
at which the wave function vanishes. Due to the superfluid property of $\psi^\e$ such nodes represent quantum mechanical {\it vortices}.
At such points, the Bohmian 
trajectories $X^\e$ are not well defined, but since \( P^\e \) does vanish as well 
at these points, there is a natural analytic continuation of the 
trajectories through such nodes. 
In the following, we shall numerically study the example given in \cite{BDGPZ}. 
More precisely, $\psi^\e$ is given by the 
superposition of the ground state and the second excited state of the harmonic oscillator (we also put \( \e=1 \) in this example),
i.e.,
\[ 
\psi(t,x)=\left(1+(1-2x^2)e^{-2it}\right)e^{-x^{2}-{it}/{2}}. \]
This wave function vanishes for \( x=0 \) and for all times \( t=(2k+1)\pi/2 \), with \( 
k\in\mathbb{Z} \). To treat the limit `0/0' numerically, we add some 
quantity of the order of the rounding error to the wave function 
which will consequently provide the limit with an error of the order of the 
unavoidable numerical error. The resulting trajectories can be seen 
in Fig.~\ref{vortex}. Note that indeed, all trajectories avoid the vortices at \( 
t=\pi/2 \) and \( t=3\pi/2 \), only the trajectory for \( x=0 \) 
passes through these nodes.
\begin{figure}[thb!]
\includegraphics[width=0.6\textwidth]{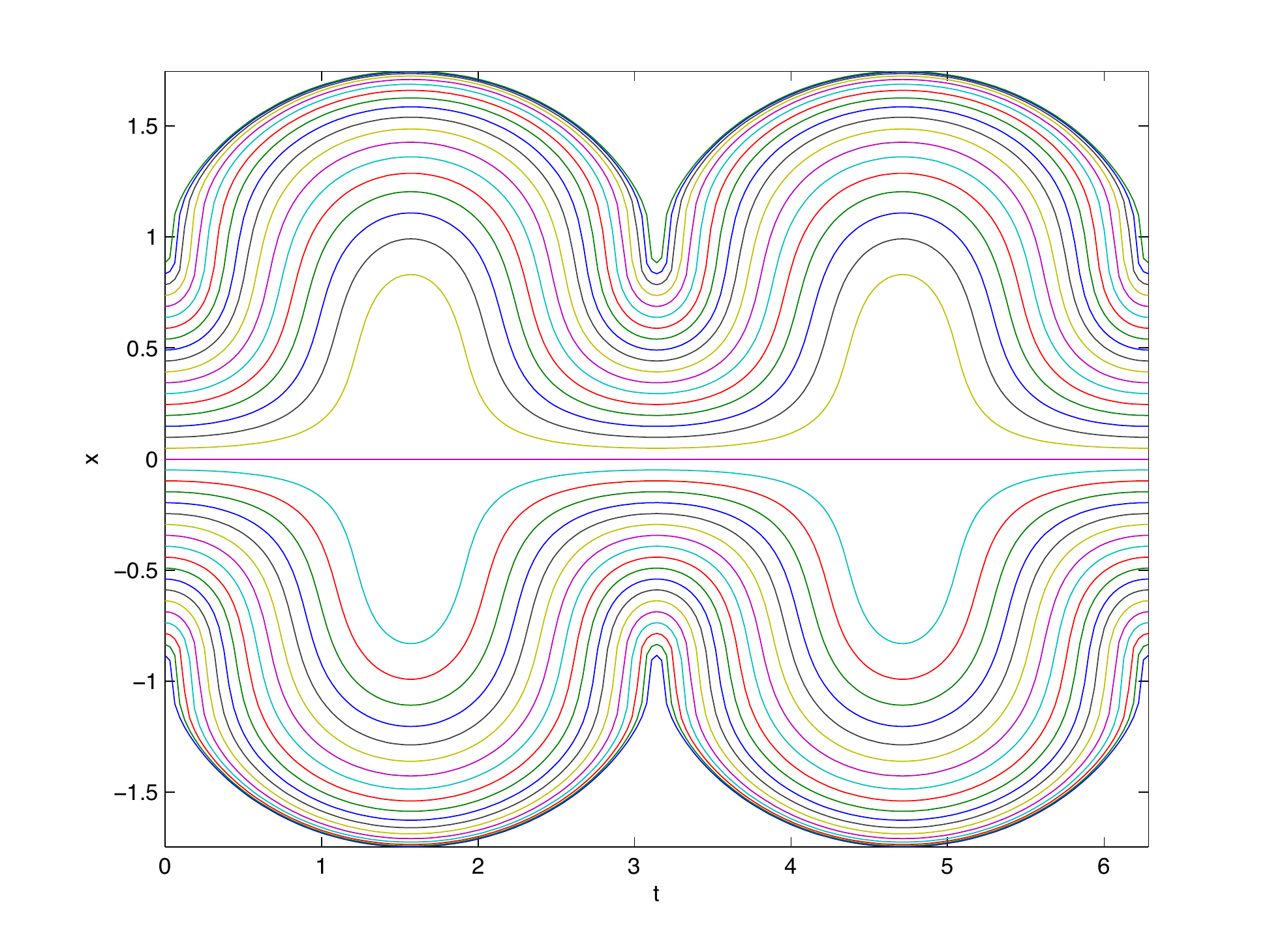}
 \caption{Bohmian trajectories for \( \e=1 \) in a harmonic oscillator potential 
 $V(x)=\frac{1}{2}x^{2}$ with $\psi$ given as a superposition of the ground state and 
the second excited state.}
 \label{vortex}
\end{figure}

\subsubsection{Semiclassical wave packets} 

In \cite{MPS2} a result similar to Theorem \ref{th1} (ii) is proved, for 
the case of {\it semiclassical wave-packets} (see also \cite{DuRo} for a closely related study). The corresponding initial data are of the form
\begin{equation}\label{wvp}
\psi^\e_0(x)= \e^{-d/4}\, a_0\left(\frac{x-x_0}{\sqrt{\e}}\right) e^{ik\cdot (x-x_0)/\e}, \quad a_0\in \mathcal S(\R^d;\C).
\end{equation}
The main differences between WKB states and semiclassical wave packets are that for the latter,
the particle density concentrates in a point, i.e.,
$$
\rho_0^\e(x) \stackrel{\e\rightarrow 0_+ }{\longrightarrow} \delta(x-x_0), \quad \text{in $\mathcal D'(\R^d)$,}
$$
and that the corresponding semiclassical approximation does {\it not exhibit caustics}, cf. \cite{CaFe} for more details. This in particular 
implies that for semiclassical wave packets one can 
prove convergence of the Bohmian trajectories on {\it any} finite time-interval \cite{MPS2}. 
An example for such a situation (with $k_0 = 0$) can be seen in  Fig.~\ref{coh}. 
\begin{figure}[thb!]
\includegraphics[width=0.5\textwidth]{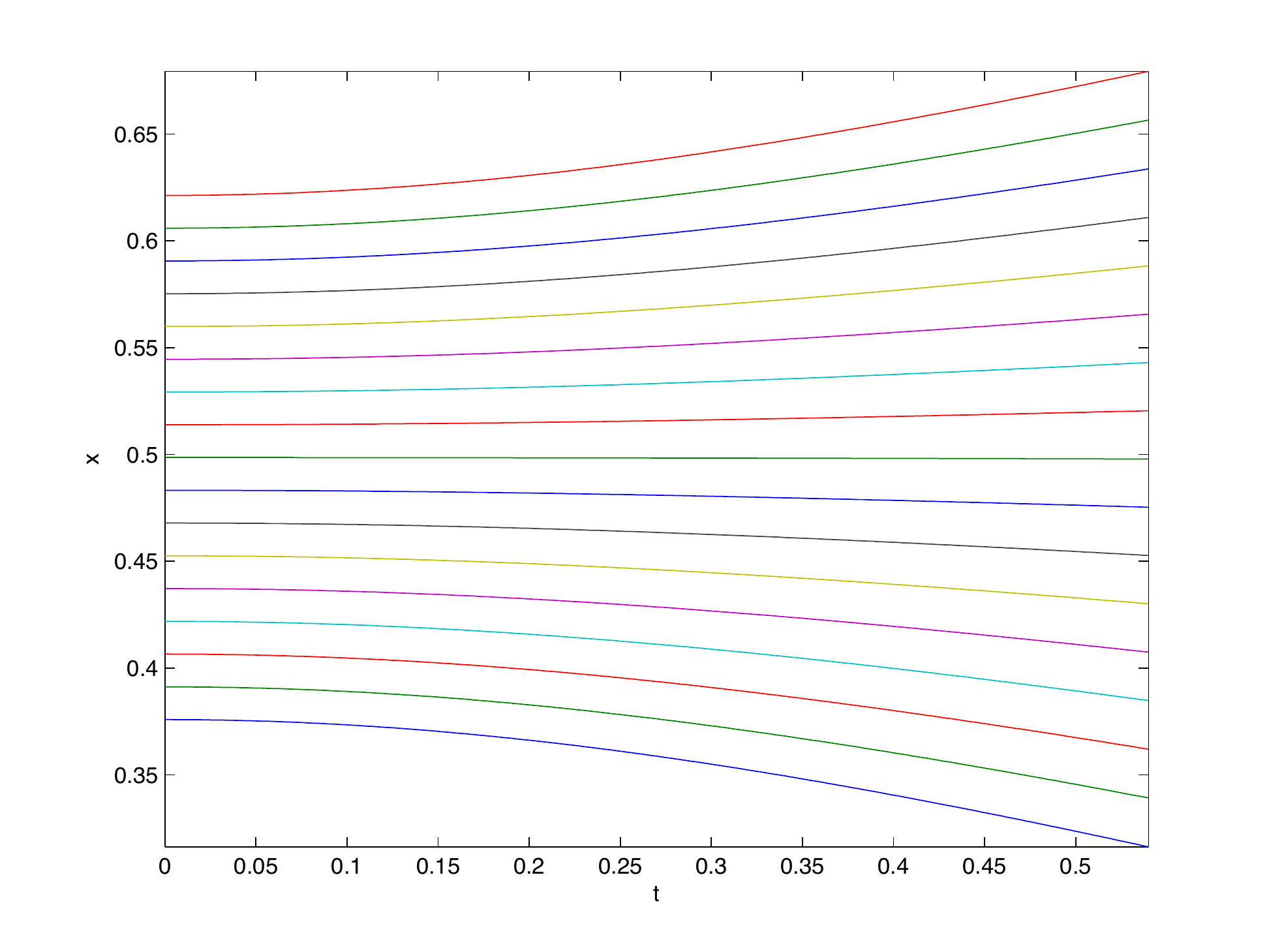}
 \caption{Bohmian trajectories for wave packet initial data of the form \eqref{wvp}
with $k_0 = 0$, $x_0 = 1/2$, $a_0(z) = e^{-z^2}$
and $\e=10^{-3}$.}
 \label{coh}
\end{figure}
The corresponding classical trajectories would be just lines parallel to the $t$-axis. 
Since these data do not lead to a caustic, there is just a slight 
defocusing effect to be seen with respect to the classical trajectories.

\subsubsection{Caustics}
In this last subsection we shall, finally, present examples exhibiting caustics in the 
classical limit. To this end, we shall first study the case where the caustic is 
just one single point, i.e., a situation in which all classical trajectories $X(t,y)$ cross at $(x^*, T^*)\in \R_t\times \R_x$. As 
a particular example, we shall consider the harmonic oscillator with potential 
\begin{equation}\label{hopot}V(x)=\frac{1}{2} \left(x-\frac{1}{2}\right)^{2},\end{equation}  and an initial data in the form
\begin{equation}\label{hoinin}\psi^\e_0(x)=e^{-25(x-1/2)^{2}},\\
\end{equation} 
i.e., a WKB state with Gaussian amplitude and $S_0 (x) =0$. Then, the classical
trajectories $X(t,y)$ all intersect in one point as can be seen in 
Fig.~\ref{X1e3HOu0newton}. 
\begin{figure}[thb!]
\includegraphics[width=0.5\textwidth]{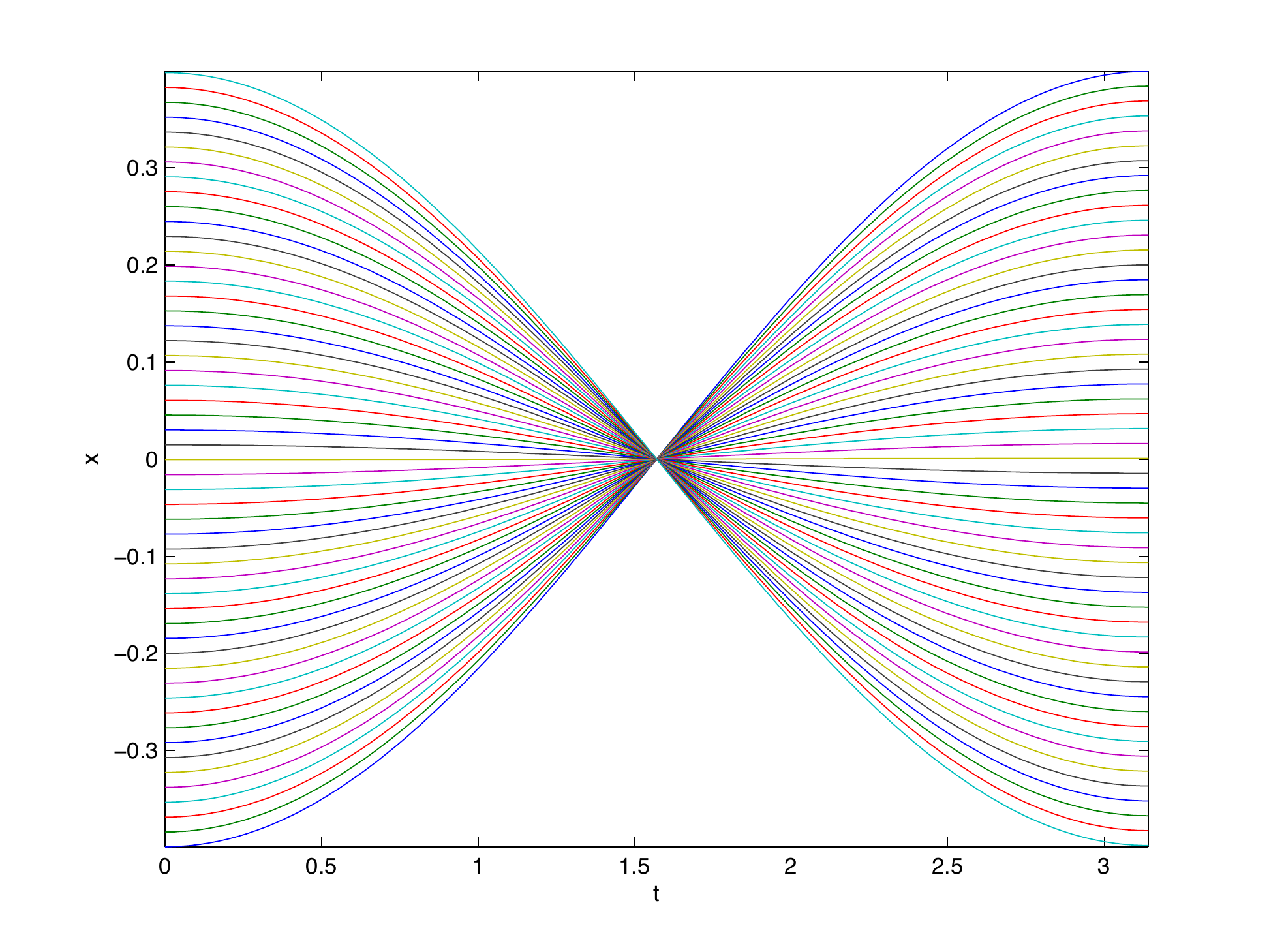}
 \caption{Classical trajectories $X(t,y)$ for the harmonic oscillator potential \eqref{hopot} and $\psi_0^\e$ given by \eqref{hoinin}.}
 \label{X1e3HOu0newton}
\end{figure}

The same situation for the Bohmian trajectories $X^\e(t,y)$ and $\e=10^{-3}$ can be seen in 
Fig.~\ref{X1e3HOu0}. The closeup of the  region of intersection when $\e=0$ 
clearly shows that the trajectories come close to $x^*$, but keep a finite distance from it except for the one trajectory 
which is parallel to the $t$-axis and goes straight through $x^*$. 
The solution $\psi^\e$ is periodic in time and shows  a 
breather-type behavior with a large $|\psi^\e|$ at the caustic. We show 
only a half-period of this periodic motion.
\begin{figure}[thb!]
    \includegraphics[width=0.45\textwidth]{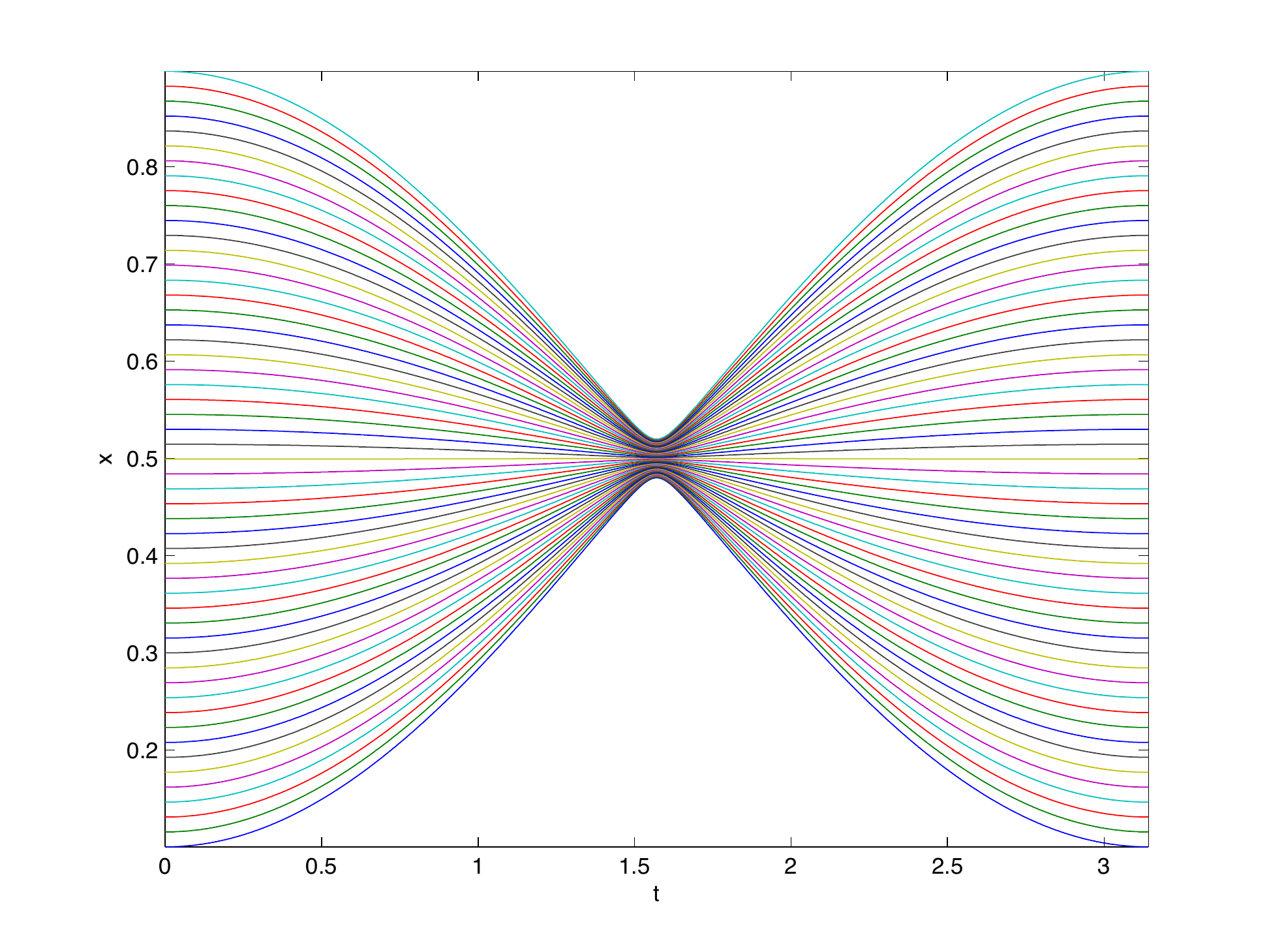}
    \includegraphics[width=0.45\textwidth]{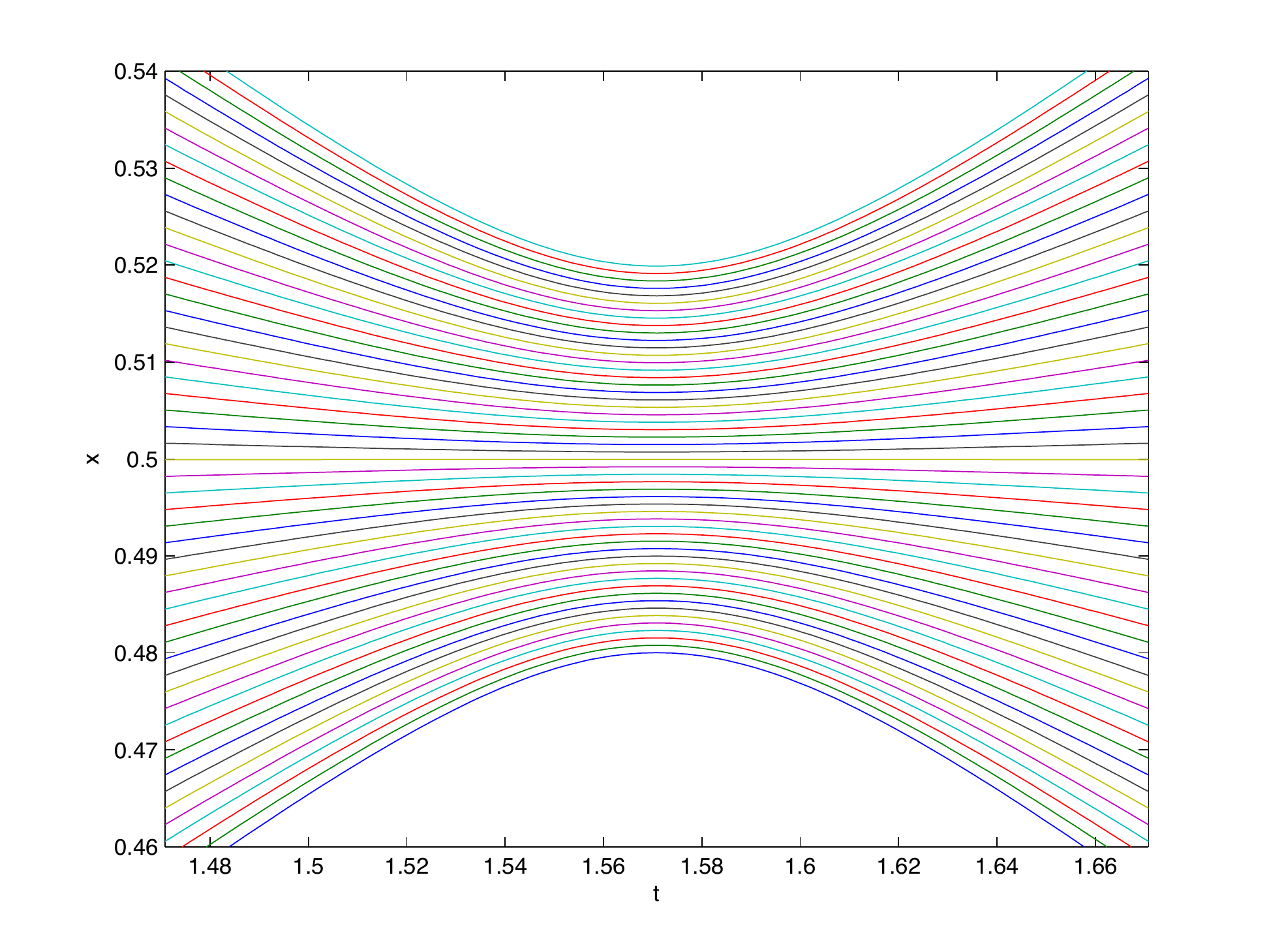}
 \caption{Left: Bohmian trajectories $X^\e(t,y)$ for the harmonic oscillator potential \eqref{hopot} 
 and $\psi_0^\e$ given by \eqref{hoinin}. Right: A closeup of the central region near $x^*$.}
 \label{X1e3HOu0}
\end{figure}

Next, we consider the case $V(x)=0$ with WKB initial data  
\begin{equation}\label{numfreeini}
\psi_0(x)=e^{-25(x-1/2)^{2}} e^{i S_0(x)/\e},\quad S_0(x) = 
-\frac{1}{5} \ln\cosh\left (5 x-\frac{5}{2}\right)
\end{equation} 
as in \cite{MPP},
i.e., the same amplitude as before but with nonzero initial phase. 
The time dependence of the density $\rho$ shows a strong maximum 
followed by a zone of oscillation inside a break-up zone as can be seen in 
Fig.~\ref{rho1e3}.
\begin{figure}[thb!]
    \includegraphics[width=\textwidth]{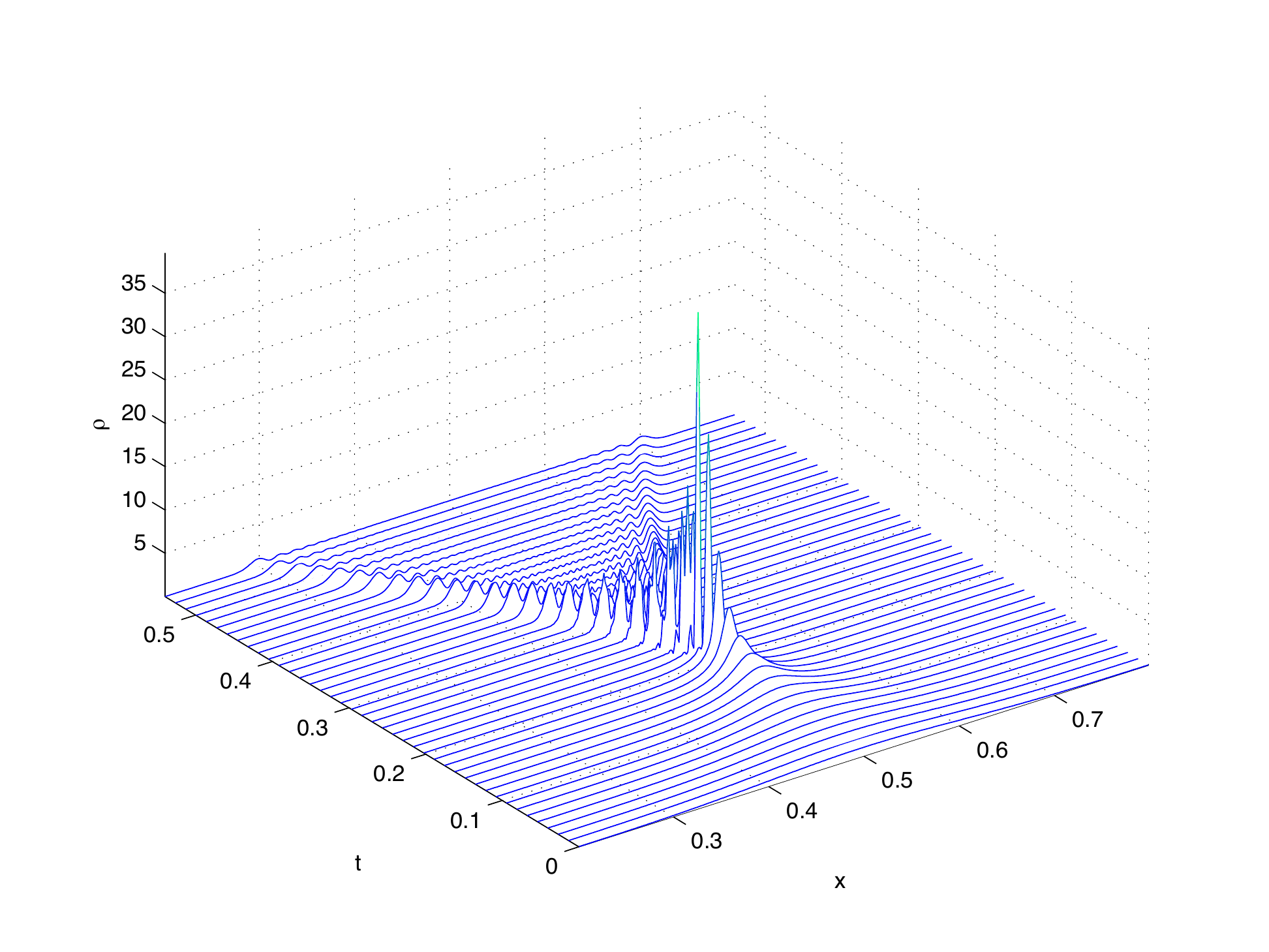}
 \caption{Density $\rho=|\psi^{\epsilon}|^{2}$ for $V(x)=0$ and $\psi_0^\e$ given by \eqref{numfreeini}.}
 \label{rho1e3}
\end{figure}
In this case, the classical trajectories $X(t,y)$ will lead to a diffuse caustic as depicted in 
Fig.~\ref{caustic}. For finite \( \e \),  
the Bohmian trajectories $X^\e(t,y)$ obviously do not cross, but there are rapid 
oscillations within the caustic 
region as can be seen in Fig.~\ref{Xcau1e3}.
\begin{figure}[thb!]
\includegraphics[width=0.45\textwidth]{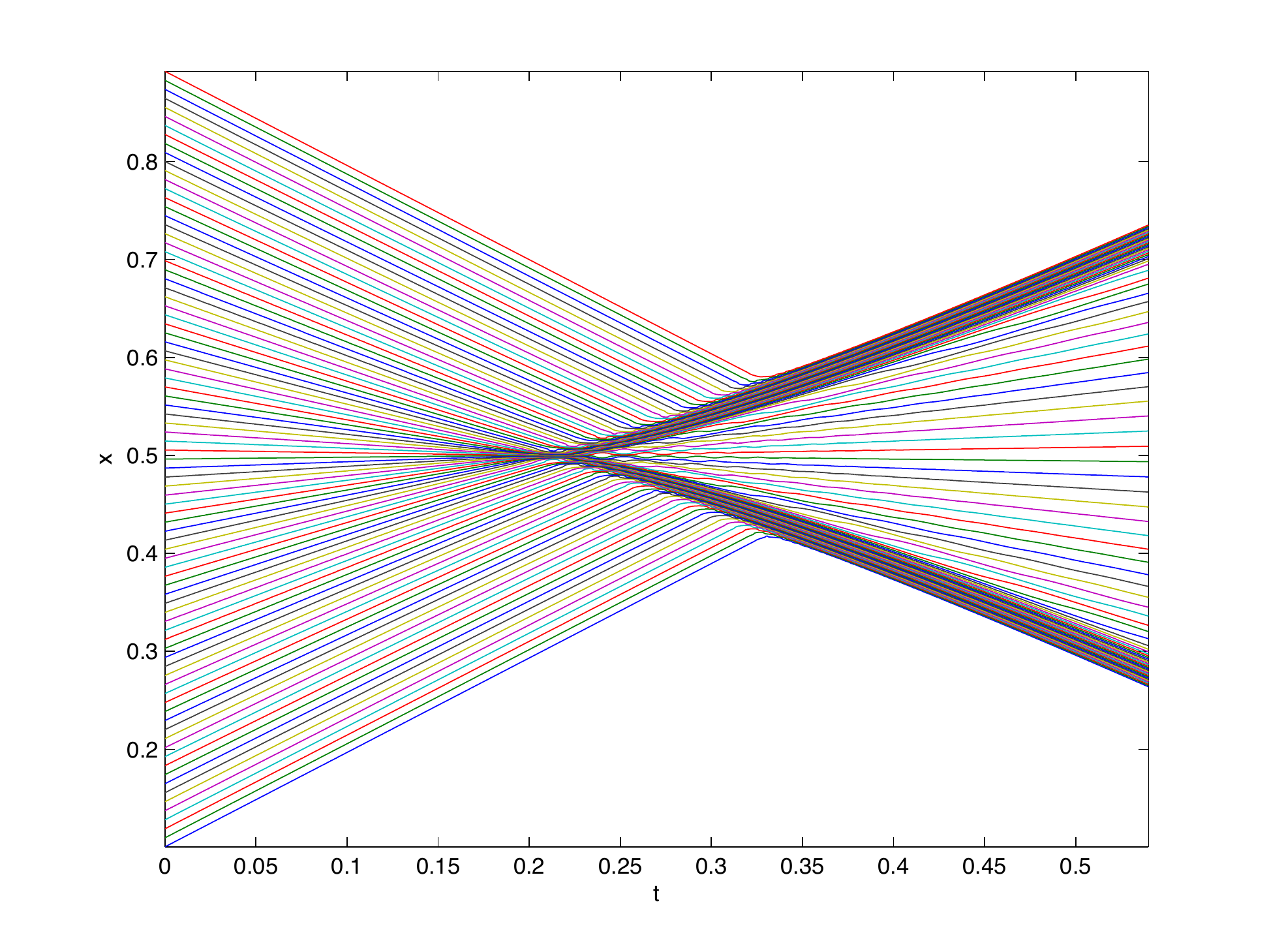}
\includegraphics[width=0.45\textwidth]{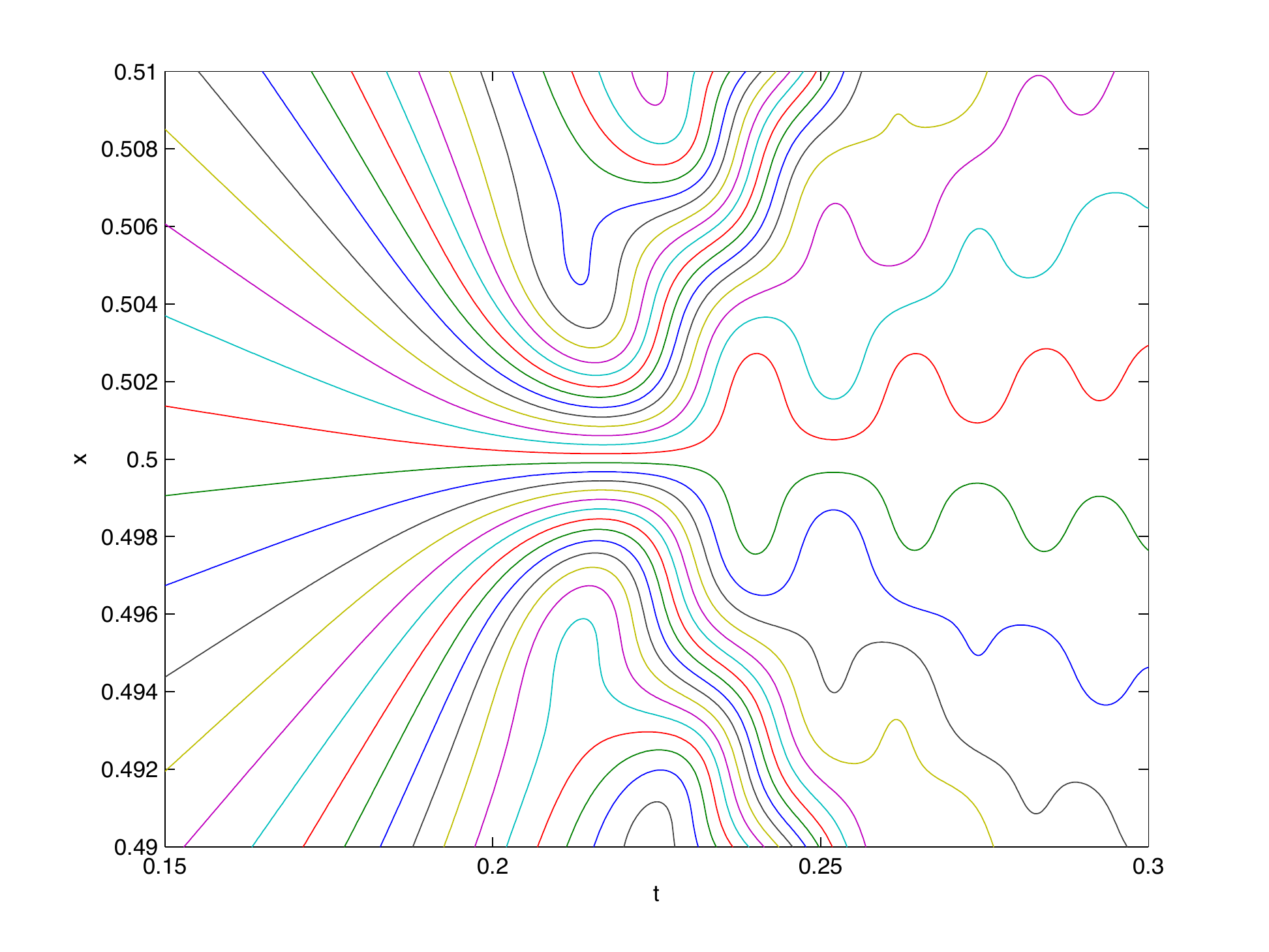}
 \caption{Left: Bohmian trajectories $X^\e(t,y)$ for $V(x)=0$ and $\psi_0^\e$ given by \eqref{numfreeini}. Right: A closeup of the central region.}
 \label{Xcau1e3}
\end{figure}

However, oscillations do not only appear in the trajectories, but 
also in the momentum \( P^\e(t,y) = u^\e(t,X^\e(t,y)) \) along any trajectory $X^\e$ which is ``deflected'' at the 
caustic region. This can be clearly seen in Fig.~\ref{Pcau1e3} where several
\( P^\e \) are plotted along the corresponding trajectories $X^\e$. The 
oscillations within $P^\e$ are reminiscent of so-called {\it 
dispersive shocks}, as observed, e.g., in the Korteweg-de Vries 
equation with small dispersion, see 
for instance \cite{GK} and references therein.
\begin{figure}[thb!]
\includegraphics[width=\textwidth]{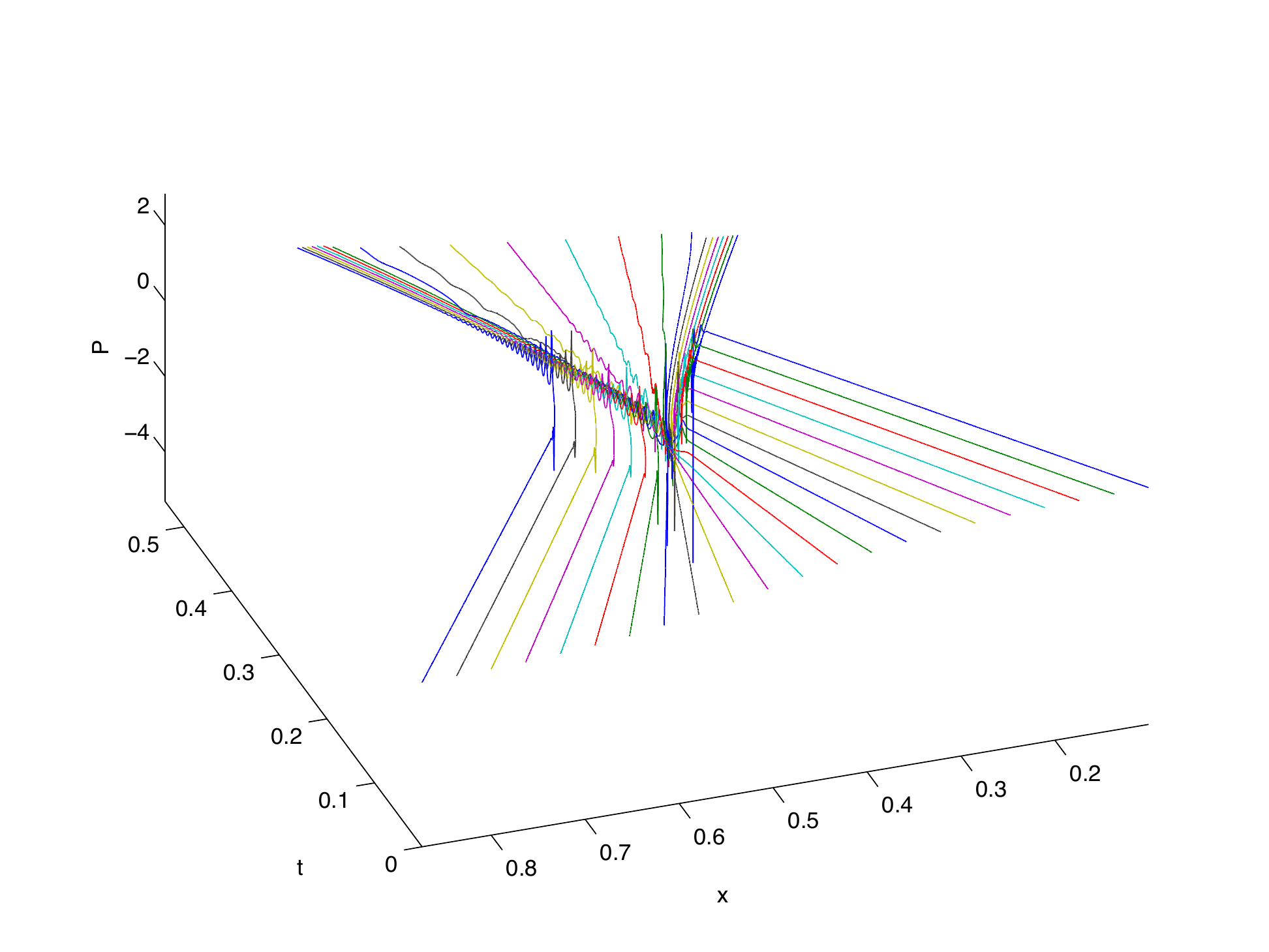}
 \caption{The quantity \( P^\e(t, y) \) along the Bohmian trajectories $X^\e(t,y)$ given 
 Fig.~\ref{Xcau1e3}.}
 \label{Pcau1e3}
\end{figure}
This is even more visible in Fig.~\ref{Pcau1e3single} where the 
oscillations on the left most trajectory in Fig.~\ref{Pcau1e3} are 
shown in dependence of $t$, thus in a projection onto the $t$-axis. 
\begin{figure}[thb!]
\includegraphics[width=0.6\textwidth]{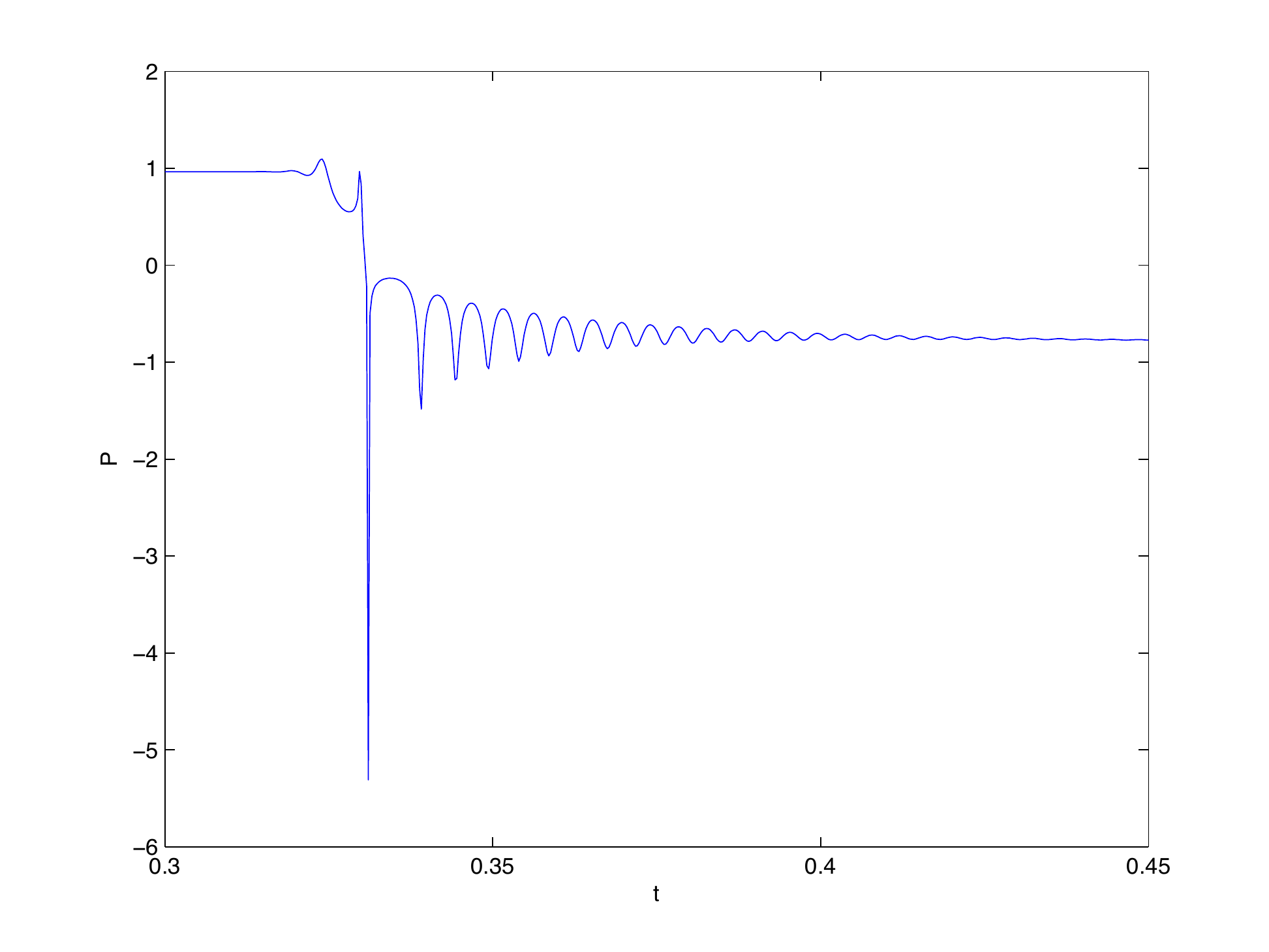}
 \caption{The quantity \( P^\e(t, y) \) along the left most 
 trajectory in Fig.~\ref{Pcau1e3} in dependence of $t$.}
 \label{Pcau1e3single}
\end{figure}

\end{document}